\newtheorem{theorem}{Theorem}
\newtheorem*{theorem*}{Theorem}
\newtheorem{definition}{Definition}
\newtheorem{notation}{Notation}
\title{Multipartite Intrinsic Non-Locality and Device-Independent Conference Key Agreement}
\author[1,5]{Aby Philip} 
\author[2,4]{Eneet Kaur} 
\author[3]{Peter Bierhorst} 
\author[1,6]{Mark M. Wilde} 
\affil[1]{Hearne Institute for Theoretical Physics, Department of Physics and Astronomy, and Center for Computation and Technology, Louisiana State University, Baton Rouge, Louisiana 70803, USA}  
\affil[2]{Institute for Quantum Computing and Department of Physics and Astronomy, University of Waterloo, Waterloo, Ontario N2L 3G1, Canada}
\affil[3]{Department of Mathematics, University of New Orleans, Louisiana 70148, USA} 
\affil[4]{Wyant College of Optical Sciences, University of Arizona, Tucson, Arizona 85721, USA} 
\affil[5]{School of Applied and Engineering Physics, Cornell University, Ithaca, New York 14850, USA}
\affil[6]{School of Electrical and Computer Engineering, Cornell University, Ithaca, New York 14850, USA}
\begin{document}

\maketitle

\begin{abstract}
    In this work, we introduce multipartite intrinsic non-locality as a method for quantifying resources in the multipartite scenario of device-independent (DI) conference key agreement. We prove that multipartite intrinsic non-locality is additive, convex, and monotone under a class of free operations called local operations and common randomness. As one of our technical contributions, we establish a chain rule for two variants of multipartite mutual information, which we then use  to prove that  multipartite intrinsic non-locality is additive. This chain rule may be of independent interest in other contexts. All of these properties of multipartite intrinsic non-locality are helpful in establishing  the main result of our paper:  multipartite intrinsic non-locality is an upper bound on secret key rate in the general multipartite scenario of DI conference key agreement. We discuss various examples of DI conference key protocols and compare our upper bounds for these protocols with known lower bounds. Finally, we calculate upper bounds on recent experimental realizations of DI quantum key distribution.
\end{abstract}  

\tableofcontents

\section{Introduction}
    In principle, quantum key distribution (QKD) can produce a secret key secured by the laws of physics \cite{ch1984quantum,PhysRevLett.67.661,10.1145/382780.382781}. In the device-dependent setting of QKD, it is assumed that the devices possessed by Alice and Bob are perfectly characterized and trusted; i.e., the measurements applied and the states used are assumed to be known and certified. However, after several experiments implementing QKD protocols, researchers have found this assumption to be too restrictive.

    To combat our reliance on some of these strong assumptions underpinning QKD, several scenarios have been developed with varying degrees of trust in the measurements and states used. In QKD, if the measurements, states, or devices possessed by one of the parties are not trusted, the scenario is called one-sided device-independent QKD \cite{PhysRevLett.106.110506,PhysRevA.85.010301}. If all devices involved are deemed to be untrustworthy, the scenario is called device-independent QKD~\cite{743501,PhysRevLett.98.230501,arnon2018practical,PhysRevLett.113.140501}. 

    Researchers have established upper bounds on the secret key agreement capacity for all the scenarios described above \cite{takeoka2014fundamental,kaur2020fundamental} (see also \cite{winczewski2022limitations}). The basic idea behind these upper bounds  comes from a classical information measure called intrinsic information \cite{748999}. Intrinsic information inspired the squashed-entanglement upper bound for device-dependent QKD \cite{christandl2004squashed,takeoka2014fundamental}, and  squashed entanglement in turn inspired  the development of quantum intrinsic non-locality \cite{kaur2020fundamental} and quantum intrinsic steerability \cite{KWW17}. These latter quantities serve as upper bounds for device-independent QKD and one-sided device-independent QKD, respectively, as shown in \cite{kaur2020fundamental}. Along with being upper bounds on a certain cryptographic task, these quantities are also resource quantifiers for Bell non-locality and steerability, respectively.

    Here, we go beyond device-independent QKD  and Bell non-locality for two parties and address device-independent (DI) conference key agreement \cite{ribeiro2018fully,murta2020quantum} and multipartite non-locality. Conference key agreement is the task of distributing secret key among more than two users, as encountered in the context of quantum networks. Part of the interest in this task comes from the fact that a protocol based on genuinely multipartite entangled states can achieve higher rates of conference key agreement than a protocol based on a combination of bipartite entangled states \cite{epping2016large}. Just as Bell non-locality is the key resource for DIQKD, one would expect multipartite non-locality to be the key resource in DI conference key agreement.
    
    Here, we propose a resource quantifier for multipartite non-locality called multipartite intrinsic non-locality. We base instances of this resource quantifier on  total correlation and dual total correlation \cite{watanabe1960information} (see also \cite{YHHHOS09,AHS08}), which generalize mutual information to the multipartite case. Total correlation and dual total correlation have previously been used to establish upper bounds on entanglement distillation and secret key agreement capacities of quantum broadcast channels \cite{7438836}; see  \cite{YHHHOS09} for its use in establishing an upper bound on distillable secret key and distillable entanglement of a multipartite state. We use multiparite intrinsic non-locality to derive upper bounds on the ultimate rate at which device-independent (DI) conference key agreement is possible.
    
    To show that our quantity is indeed a useful upper bound, it is necessary to prove that it is additive. In order to prove additivity (and other useful properties) of multipartite intrinsic non-localities, we establish a chain rule for total correlation and dual total correlation of two rounds of the conference key agreement protocol in Section~\ref{sec:mul-int-non}. The chain rule for total correlation expresses the total correlation of two rounds of the conference key agreement protocol as the sum of total correlation terms related to the individual rounds of the conference key agreement protocol and other information theoretic quantities. These additional information-theoretic quantities are expressed in terms of conditional mutual information. 
    For our paper, we derive a chain rule for total correlation and dual total correlation that meets the aforementioned criteria and holds for all finite $M$. Such a broadly applicable chain rule is not obtained in \cite{kaur2020fundamental}. 
    
    In what follows, we first discuss no-signaling and quantum correlation and then proceed to no-signaling and quantum extensions. After that, we define a quantum tripartite intrinsic non-locality, which is based on tripartite total correlation, and prove that it is indeed additive, convex, and monotone under local operations and common randomness. We then define the multipartite intrinsic non-localities using total correlation and dual total correlation, starting by defining and discussing multipartite intrinsic non-locality based on total correlation and then moving on to the one defined in terms of dual total correlation.  We establish important identities (our chain rule) for total correlation and dual total correlation that allow us to use arguments similar to those presented for the tripartite scenario to prove that the multipartite intrinsic non-localities, presented in this paper, are additive, convex upper bounds on device-independent conference key agreement capacity in the general $M$-partite case. Then, we give a general overview of device-independent conference key agreement for the tripartite case and define the DI conference key agreement capacity. Finally, we show that tripartite intrinsic no-locality is an upper bound on DI conference key agreement capacity for the tripartite situation and provide arguments to show that multipartite intrinsic non-locality upper bounds the $M$-partite DI conference key agreement capacity, for all finite $M$.

    As other contributions, we calculate upper bounds on both quantum tripartite intrinsic non-localities using eavesdropper attacks similar to those from \cite{kaur2020fundamental} and \cite{arnon2021upper}, which were used to calculate upper bounds on quantum intrinsic non-locality. We plot quantum tripartite intrinsic non-locality versus parity-CHSH violation under these attacks, and we compare these to previously calculated lower bounds from \cite{ribeiro2018fully}. We also consider a noise model in which each share of the tripartite state passes through a qubit depolarizing channel. We plot quantum tripartite intrinsic non-localities versus the depolarizing parameter~$p_{\text{dep}}$ for this noise model and compare them to the lower bound from~\cite{ribeiro2018fully}.

    The rest of this paper is structured as follows.  
    Section~\ref{sec:no-sig} discusses no-signaling constraints, no-signaling extensions, and quantum extensions, focusing especially on the tripartite case. Section~\ref{sec:tri-int-non} contains the definition of tripartite intrinsic non-locality and proves that it is additive using a chain rule, which we derive here.  Sections~\ref{sec:mul-int-non} and~\ref{sec:dual-total} generalize tripartite intrinsic non-locality and all of its properties to the multipartite case using total correlation and dual total correlation, respectively, and generalizations of the aforementioned chain rule. Section~\ref{sec:di-cka-capacity} introduces a general form of a DI conference key agreement protocol and its associated capacity. Then, we show that tripartite intrinsic non-locality is an upper bound on the tripartite device-independent conference key agreement capacity. Section~\ref{sec:ex-ex} contains some examples of our upper bound calculated under various attacks by an eavesdropper. Additionally, in Section~\ref{sec:UpBo-DIQKD}, we evaluate upper bounds for recent experimental protocols implementing device-independent quantum key distribution (DIQKD) \cite{zhang2021experimental,schwonnek2021device,liu2021highspeed}.  Section~\ref{sec:con-clues} contains our conclusions and possible directions for future work.

\section{Correlations, No-Signaling Conditions, and Quantum Extensions}\label{sec:no-sig}

    First, let us define the types of correlations that we are concerned with in this paper: no-signaling correlations and quantum correlations. Let us begin by discussing no-signaling correlations.

    No-signaling conditions impose constraints on correlations, which imply that parties sharing the correlation cannot use it alone to communicate; i.e., no party can infer the input choices of another party based solely on their own outputs \cite{beckman2001causal}. On a technical level, no-signaling conditions imply that tracing over subsets of outputs of a correlation results in tracing over the corresponding inputs \cite{brunner2014bell}. These conditions are relevant in our scenario as it is necessary to verify that the correlations observed are from the state and measurement choices shared by the participants and not from classical communication when the input choices are made. Compliance with no-signaling conditions can be enforced by imposing space-like separation between measuring parties or constructing other barriers to prevent communication.
    
    No-signaling conditions for the tripartite scenario are as follows:
    \begin{align}\label{nosig}
        \sum_{a}p(a,b,c\vert x,y,z) &= \sum_{a}p(a,b,c\vert \bar{x},y,z) = p(b,c\vert y,z) \quad \forall x,\bar{x},\notag\\
        \sum_{b}p(a,b,c\vert x,y,z) &= \sum_{b}p(a,b,c\vert x,\bar{y},z) = p(a,c\vert x,z) \quad \forall y,\bar{y},\notag\\
        \sum_{c}p(a,b,c\vert x,y,z) &= \sum_{c}p(a,b,c\vert x,y,\bar{z}) = p(a,b\vert x,y) \quad \forall z,\bar{z}.
    \end{align}
    The set of all correlations that satisfy the above three conditions in \eqref{nosig} are called no-signaling correlations. 
    The no-signaling conditions above can also equivalently be expressed in terms of conditional mutual information as follows:
    \begin{align}
    \label{eq:nosig-info-th-equiv}
        I(X;BC\vert YZ)_{\rho} = I(Y;AC\vert XZ)_{\rho} = I(Z;AB \vert XY)_{\rho} =0,
    \end{align}
    where
    \begin{align}
        \rho_{ABCXYZ} = \sum_{a,b,c,x,y,z}q(x,y,z)p(a,b,c\vert x,y,z)\ketbra*{abcxyz}{abcxyz}_{ABCXYZ},
    \end{align}
    $p(a,b,c\vert x,y,z)$ is a no-signaling correlation, and the conditional mutual information of random variables $K$, $L$, and $M$ is defined as
    \begin{equation}
    I(K;L\vert M) \coloneqq  H(KM) + H(LM) - H(M) - H(KLM),   
    \end{equation}
     where $H$ denotes the entropy. It suffices to take the input distribution $q$ to be uniform. Note that the conditions in \eqref{nosig} imply the following ones, by tracing over two of the outputs, rather than just one:
    \begin{align}
    \label{nosig2}
        I(YZ;A\vert X)_{\rho} = I(XZ;B\vert Y)_{\rho} = I(XY;C \vert Z)_{\rho} =0.
    \end{align}

    Now we move on to quantum correlations. Consider the following scenario: Alice, Bob and Charlie are given a share of a tripartite quantum state $\rho_{\tilde{A}\tilde{B}\tilde{C}}$ that is distributed to them by a possibly unknown entity, and each party has access to a black box with which they can interact classically. For each classical input, the corresponding black box applies a positive operator-valued measure (POVM) on its respective share of the tripartite state. After the application of the POVM, the box outputs a classical value that is recorded by the corresponding participant. The correlation that is obtained using the aforementioned process is of the following form:
        \begin{align}
            p(a,b,c\vert x,y,z) = \operatorname{Tr}\!\left([\Pi^{(x)}_{a}\otimes\Pi^{(y)}_{b}\otimes\Pi^{(z)}_{c}]\rho_{\tilde{A}\tilde{B}\tilde{C}}\right),
            \label{eq:q-corr-trip}
        \end{align}
    where $\{\Pi^{(x)}_{a}\}_{a}$, $\{\Pi^{(y)}_{b}\}_{b}$, and $\{\Pi^{(z)}_{c}\}_{c}$ are POVMs. Correlations of the form described in~\eqref{eq:q-corr-trip} are called quantum correlations. Quantum correlations are a subset of no-signaling correlations. This fact can easily be seen in the example analysis below:
    \begin{align}
        \sum_{a}p(a,b,c\vert x,y,z) &= \sum_{a}\operatorname{Tr}\!\left([\Pi^{(x)}_{a}\otimes\Pi^{(y)}_{b}\otimes\Pi^{(z)}_{c}]\rho_{\tilde{A}\tilde{B}\tilde{C}}\right) \\
        &= \operatorname{Tr}\!\left([\mathbb{I} \otimes \Pi^{(y)}_{b}\otimes\Pi^{(z)}_{c}]\rho_{\tilde{A}\tilde{B}\tilde{C}}\right)\\
        &= \operatorname{Tr}\!\left([ \Pi^{(y)}_{b}\otimes\Pi^{(z)}_{c}]\rho_{\tilde{B}\tilde{C}}\right)\\
        & =p(b,c\vert y,z).
    \end{align}

    Since we are looking at non-locality for the sake of a cryptographic task, it is necessary that we delineate the power that the eavesdropper possesses. We do so by allowing the eavesdropper to possess either a no-signaling extension or a quantum extension.
    No-signaling extensions are extensions of a correlation that obey the above no-signaling constraints and can be expressed as follows:
    \begin{align}\label{nosig_ext}
        \sum_{a}p(a,b,c\vert x,y,z)\rho_{E}^{abcxyz} &= \sum_{a}p(a,b,c\vert \bar{x},y,z)\rho_{E}^{a,b,c,\bar{x},y,z}  \quad \forall x,\bar{x},\nonumber\\
        \sum_{b}p(a,b,c\vert x,y,z)\rho_{E}^{abcxyz} &= \sum_{b}p(a,b,c\vert \bar{y},x,z)\rho_{E}^{a,b,c,\bar{y},x,z}  \quad \forall y,\bar{y},\nonumber\\
        \sum_{c}p(a,b,c\vert x,y,z)\rho_{E}^{abcxyz} &= \sum_{c}p(a,b,c\vert \bar{z},y,x)\rho_{E}^{a,b,c,\bar{z},y,x}  \quad \forall z,\bar{z}.
    \end{align}

    A type of no-signaling extensions, in which we are interested, are quantum extensions. Here, the eavesdropper is in possession of a system $E$ that extends the state $\rho_{\tilde{A}\tilde{B}\tilde{C}}$ shared by Alice, Bob, and Charlie in the sense that the extension state $\rho_{\tilde{A}\tilde{B}\tilde{C}E}$ satisfies  $\rho_{\tilde{A}\tilde{B}\tilde{C}}= \operatorname{Tr}_{E}[\rho_{\tilde{A}\tilde{B}\tilde{C}E}]$. A quantum extension of a correlation is defined as follows:
    \begin{align}
          & \rho_{ABCEXYZ} \notag \\
          & = \sum_{a,b,c,x,y,z}q(x,y,z)\ketbra*{abcxyz}{abcxyz}_{ABCXYZ}\otimes\text{Tr}_{ABC}\!\left[\left(\Pi^{(x)}_{a}\otimes\Pi^{(y)}_{b}\otimes\Pi^{(z)}_{c}\otimes\mathbb{I}_{E}\right)\rho_{\tilde{A}\tilde{B}\tilde{C}E}\right]\notag \\
            & =\sum_{a,b,c,x,y,z}q(x,y,z)\ketbra*{abcxyz}{abcxyz}_{ABCXYZ}\otimes p(a,b,c\vert x,y,z)\rho_{E}^{abcxyz}.\label{q_ext}
        \end{align}
        
    \begin{notation}    
    Henceforth, we employ the shorthand 
    \begin{equation}
     [abcxyz]_{ABCXYZ} \equiv    \ketbra*{abcxyz}{abcxyz}_{ABCXYZ},
    \end{equation}
     for the sake of brevity.
    \end{notation}
    
    The above no-signaling constraints and extensions, as well as quantum extensions, can be generalized to any multipartite scenario using the basic principle behind the no-signaling constraints. Appropriate no-signaling constraints apply when considering correlations involving multiple parties. Only after we have considered the no-signaling constraints can we begin to speak about what non-locality is and quantifying non-locality. To proceed, we need to define a quantity that can serve as a quantifier for multipartite non-locality.    
    
\section{Tripartite Intrinsic Non-Locality and its Properties}

\label{sec:tri-int-non}

\subsection{Conditional Total Correlation}

\label{subsec:tot-cor}

    In this subsection, we review the  conditional total correlation and its properties \cite{watanabe1960information} (see also \cite{AHS08,YHHHOS09}), before  defining our non-locality quantifier. We will discuss dual total correlation and its related non-locality quantifier in Section~\ref{sec:dual-total}.
    
    Total correlation is an $M$-partite generalization of mutual information. Conditional total correlation is the conditional version of total correlation, and it has previously been used in various multipartite scenarios in quantum information \cite{AHS08,YHHHOS09,7438836,li2018squashed}. Conditional total correlation of a multipartite state $\rho_{A_1 \cdots A_M E}$ is defined as
    \begin{align}
        I(A_{1};\cdots ;A_{M}\vert E) \coloneqq \sum_{i=1}^{M} H(A_{i}\vert E)- H(A_{1}\cdots A_{M}\vert E),
    \end{align}
    where $H(A\vert E) \coloneqq H(AE)-H(E)$, and $H(A) \coloneqq - \text{Tr}[\rho_{A}\log_{2}\rho_{A}]$. The chain rule for the bipartite conditional mutual information is as follows:
    \begin{align}\label{crule}
        I(A;BC\vert E)= I(A;B\vert CE)+I(A;C\vert E).
    \end{align}
    There exist chain rules for conditional total correlation \cite{watanabe1960information,YHHHOS09,li2018squashed}, which are as follows:
    \begin{align}
    \label{eqn}
        I(BA_{1};A_{2};\cdots ;A_{M}\vert E) & = I(A_{1};A_{2};\cdots ;A_{M}\vert BE) + \sum_{i=2}^{M} I(B;A_{i}\vert E),\\
    \label{eqn1}
        I(A_{1};\cdots;A_{M}\vert E) & = \sum_{j=1}^{M-1} I(A_{j};A_{j+1}\cdots A_{M}\vert E).
    \end{align}
    Let  $\rho_{A_1 \cdots A_M E}$ and $\sigma_{A_1 \cdots A_M E}$ be multipartite states, for which each of the subsystems $A_1$, \ldots, $A_M$ are finite-dimensional. Suppose that  $\frac{1}{2}\Vert\rho-\sigma\Vert_{1}\leq\varepsilon$, where $\varepsilon \in [0,1]$. Then the following uniform continuity bound holds \cite[Eq.~(60)]{shirokov2021uniform}:
    \begin{equation}\label{cont}
        \vert I(A_{1}; \cdots; A_{M} \vert E)_{\rho}-I(A_{1}; \cdots; A_{M} \vert E)_{\sigma}\vert \leq 2 \varepsilon \log_{2}\operatorname{dim} \mathcal{H}_{A_{1} \cdots A_{M-1}} + M g(\varepsilon),
    \end{equation}
    where
    \begin{align}
        \label{g}
        g(\varepsilon)\coloneqq (\varepsilon + 1) \log_2(\varepsilon + 1) - \varepsilon \log_2\varepsilon.
    \end{align}
    Conditional total correlation obeys data processing under local channels \cite{YHHHOS09}:
    \begin{align}
        I(A_{1};\cdots;A_{M}\vert E)_{\rho} \geq I(\tilde{A}_{1};\cdots;\tilde{A}_{M}\vert E)_{\omega} ,
    \end{align}
    where
    \begin{align} 
    \omega_{\tilde{A}_{1}\cdots\tilde{A}_{M}E} \coloneqq \left(\mathcal{N}^{(1)}_{A_{1}\rightarrow \tilde{A}_{1}}\otimes\cdots\otimes \mathcal{N}^{(M)}_{A_{M}\rightarrow\tilde{A}_{M}}\right) \left(\rho_{\tilde{A}_{1}\cdots\tilde{A}_{M}E}\right),
    \end{align}
    and $\mathcal{N}^{(i)}_{A_{i}\rightarrow \tilde{A}_{i}}$ is a channel, for $i \in \{1, \ldots, M\}$. We now define a first version of tripartite intrinsic non-locality.
    \begin{definition}
        Let  $p(a,b,c\vert x,y,z)$ be a no-signaling correlation. Tripartite intrinsic non-locality (TINL) of $p$ is defined as
        \begin{align}
            N(A;B;C)_{p} \coloneqq \frac{1}{2}\sup_{q(x,y,z)}\inf_{\rho_{ABCXYZE}}I(A;B;C\vert EXYZ)_{\rho},
        \end{align}
        where $q(x,y,z)$ is a probability distribution for the inputs of Alice, Bob, and Charlie and $\rho_{ABCXYZE}$ is a no-signaling extension of the state shared by Alice, Bob, and Charlie, given by
        \begin{align}\label{state}
            \rho_{ABCXYZE} = \sum_{a,b,c,x,y,z}q(x,y,z)p(a,b,c\vert x,y,z)[abcxyz]_{ABCXYZ}\otimes\rho_{E}^{abcxyz}.
        \end{align}        
    \end{definition}

    \begin{definition}
        Quantum tripartite intrinsic non-locality (QTINL) of a quantum correlation\\ $p(a,b,c\vert x,y,z)$ is defined as
        \begin{align}
            N_{Q}(A;B;C)_{p} \coloneqq \frac{1}{2}\sup_{q(x,y,z)}\inf_{\rho_{ABCXYZE}}I(A;B;C\vert EXYZ)_{\rho},
        \end{align}
        where $q(x,y,z)$ is a probability distribution for the inputs of Alice, Bob, and Charlie and $\rho_{ABCXYZE}$ is a quantum extension, as in \eqref{q_ext}, of the state shared by Alice, Bob, and Charlie, given by
        \begin{align}\label{qstate}
            \rho_{ABCXYZE} = \sum_{a,b,c,x,y,z}q(x,y,z)p(a,b,c\vert x,y,z)[abcxyz]_{ABCXYZ}\otimes\rho_{E}^{abcxyz}.
        \end{align}        
    \end{definition}

    The rest of this section is structured as follows. In Section~\ref{subsec:tri-chain}, we derive the chain rule that will help us prove further theorems about tripartite intrinsic non-locality and quantum tripartite intrinsic non-locality. In Section~\ref{subsec:tri-add}, we prove that tripartite intrinsic non-locality and quantum tripartite intrinsic non-locality are additive.     Additionally, we prove important properties of tripartite intrinsic non-locality and quantum tripartite intrinsic non-locality, such as convexity and monotonicity under local operations and common randomness in Appendices~\ref{subsec:tri-conv} and~\ref{subsec:tri-mono-locr}, respectively. We also prove in Appendix~\ref{subsec:local-cor} that tripartite intrinsic non-locality and quantum tripartite intrinsic non-locality vanish for local tripartite correlations. These results are important from a resource-theoretic perspective. 

\subsection{Chain Rule for Tripartite Conditional Total Correlation}\label{subsec:tri-chain}

    Before we can prove additivity and other important properties of tripartite intrinsic non-locality, we need to establish a chain rule for the conditional total correlation of two rounds of the conference key agreement protocol: \begin{align}I(A_{1}A_{2};B_{1}B_{2};C_{1}C_{2}\vert E).
    \end{align}
    We will resolve this quantity into a sum of conditional total correlation terms related to the individual rounds of the protocol and other information theoretic quantities that depend on both rounds. These extra information-theoretic quantities are expressed as conditional mutual information quantities. Later in Theorem~\ref{theorem_3}, we establish a general multipartite version of this chain rule.

    \begin{theorem}\label{theorem_2}
        For every  state $\rho_{A_{1}B_{1}C_{1}A_{2}B_{2}C_{2}E}$, the following equality holds:
        \begin{align}\label{eqnA}
            I(A_{1}&A_{2};B_{1}B_{2};C_{1}C_{2}\vert E)_{\rho}= I(A_{1};B_{1};C_{1}\vert EA_{2}B_{2}C_{2})_{\rho} + I(A_{2};B_{2};C_{2}\vert E)_{\rho} \nonumber\\
            &\qquad + I(C_{1};A_{2}B_{2}\vert EC_{2})_{\rho} + I(A_{1};B_{2}C_{2}\vert EA_{2})_{\rho} + I(B_{1};A_{2}C_{2}\vert EB_{2})_{\rho}   . 
        \end{align}
    \end{theorem}
    
    \begin{proof}
        Consider that, by applying definitions and the chain rule for conditional
entropy,
\begin{align}
& I(A_{1}A_{2};B_{1}B_{2};C_{1}C_{2}\vert E)\notag \\
& =H(A_{1}A_{2}\vert E)+H(B_{1}B_{2}\vert E)+H(C_{1}C_{2}\vert E)-H(A_{1}A_{2}B_{1}B_{2}
C_{1}C_{2}\vert E) \\
& =H(A_{2}\vert E)+H(A_{1}\vert EA_{2})+H(B_{2}\vert E)+H(B_{1}\vert EB_{2})+H(C_{2}
\vert E)+H(C_{1}\vert EC_{2}) \notag \\
& \qquad -H(A_{2}B_{2}C_{2}\vert E)-H(A_{1}B_{1}C_{1}\vert EA_{2}B_{2}C_{2})\\
& =I(A_{2};B_{2};C_{2}\vert E)+H(A_{1}\vert EA_{2})+H(B_{1}\vert EB_{2})+H(C_{1}
\vert EC_{2})-H(A_{1}B_{1}C_{1}\vert EA_{2}B_{2}C_{2}).
\end{align}
Then consider that
\begin{align}
& H(A_{1}\vert EA_{2})+H(B_{1}\vert EB_{2})+H(C_{1}\vert EC_{2})-H(A_{1}B_{1}C_{1}
\vert EA_{2}B_{2}C_{2})\notag \\
& =H(A_{1}\vert EA_{2})+H(B_{1}\vert EB_{2})+H(C_{1}\vert EC_{2})-H(A_{1}B_{1}C_{1}
\vert EA_{2}B_{2}C_{2})\notag \\
& \qquad +H(A_{1}\vert EA_{2}B_{2}C_{2})-H(A_{1}\vert EA_{2}B_{2}C_{2}) +H(B_{1}\vert EA_{2}B_{2}C_{2})-H(B_{1}\vert EA_{2}B_{2}C_{2})\notag \\
& \qquad +H(C_{1}\vert EA_{2}B_{2}C_{2})-H(C_{1}\vert EA_{2}B_{2}C_{2})\\
& =I(A_{1};B_{1};C_{1}\vert EA_{2}B_{2}C_{2})\notag  +H(A_{1}\vert EA_{2})-H(A_{1}\vert EA_{2}B_{2}C_{2})\notag \\
& \qquad +H(B_{1}\vert EB_{2})-H(B_{1}\vert EA_{2}B_{2}C_{2}) +H(C_{1}\vert EC_{2})-H(C_{1}\vert EA_{2}B_{2}C_{2})\\
& =I(A_{1};B_{1};C_{1}\vert EA_{2}B_{2}C_{2})+I(A_{1};B_{2}C_{2}\vert EA_{2}
)+I(B_{1};A_{2}C_{2}\vert EB_{2})+I(C_{1};A_{2}B_{2}\vert EC_{2}).
\end{align}
This concludes the proof. 
    \end{proof}

\subsection{Additivity}\label{subsec:tri-add}
    In this section, we prove that tripartite intrinsic non-locality is additive. This is indeed essential for the tripartite intrinsic non-locality to be a useful upper bound on DI conference key agreement capacity. 
    
    \begin{theorem}[Additivity of TINL]\label{theorem:additivity}
        Let $p(a_{1},a_{2},b_{1},b_{2},c_{1},c_{2}\vert x_{1},x_{2},y_{1},y_{2},z_{1},z_{2})$ be a no-signaling correlation for which  no-signaling constraints hold for all parties. For example, the no-signaling constraints for Alice are as follows:
        \begin{align}
            \sum_{a_{1}}p(a_{1},a_{2},b_{1},b_{2},c_{1},c_{2}&\vert x_{1},x_{2},y_{1},y_{2},z_{1},z_{2}) \nonumber\\ &= \sum_{a_{1}}p(a_{1},a_{2},b_{1},b_{2},c_{1},c_{2}\vert \bar{x}_{1},x_{2},y_{1},y_{2},z_{1},z_{2}) \quad \forall x_{1},\bar{x}_{1},\label{eq:intranosig1}\\
            \sum_{a_{2}}p(a_{1},a_{2},b_{1},b_{2},c_{1},c_{2}&\vert x_{1},x_{2},y_{1},y_{2},z_{1},z_{2}) \nonumber\\ &= \sum_{a_{2}}p(a_{1},a_{2},b_{1},b_{2},c_{1},c_{2}\vert x_{1},\bar{x}_{2},y_{1},y_{2},z_{1},z_{2}) \quad \forall x_{2},\bar{x}_{2}.\label{eq:intranosig2}
        \end{align}
        Suppose that similar constraints hold for Bob and Charlie as well.
        Let $t(a_{1},b_{1},c_{1}\vert x_{1},y_{1},z_{1})$ and $r(a_{2},b_{2},c_{2}\vert x_{2},y_{2},z_{2})$ be no-signaling correlations corresponding to the marginals of $p$. Then the intrinsic non-locality is superadditive, in the sense that
        \begin{align}
        \label{eq:superadd-ineq-1}
            N(A_{1}A_{2};B_{1}B_{2};C_{1}C_{2})_{p} \geq N(A_{1};B_{1};C_{1})_{t} + N(A_{2};B_{2};C_{2})_{r}.
        \end{align}
        If \begin{equation}
            p(a_{1},a_{2},b_{1},b_{2},c_{1},c_{2}\vert x_{1},x_{2},y_{1},y_{2},z_{1},z_{2}) = t(a_{1},b_{1},c_{1}\vert x_{1},y_{1},z_{1})r(a_{2},b_{2},c_{2}\vert x_{2},y_{2},z_{2}),
            \label{eq:prod-form}
        \end{equation}
        then the intrinsic non-locality is additive in the following sense:
        \begin{align}
            N(A_{1}A_{2};B_{1}B_{2};C_{1}C_{2})_{p} = N(A_{1};B_{1};C_{1})_{t} + N(A_{2};B_{2};C_{2})_{r}.
            \label{eq:additivity-int-non-loc}
        \end{align}
    \end{theorem}

    \noindent No-signaling constraints like \eqref{eq:intranosig1}--\eqref{eq:intranosig2} can in principle be enforced by a party performing parallel measurements shielded from each other, such as Alice recording $a_1$ and $a_2$ at separate locations between which communication is not possible. The stronger product assumption in \eqref{eq:prod-form} cannot be enforced in this way, but the condition will hold in the natural setting of sequential experimental trials in which an i.i.d.~assumption is made.
    
    \begin{proof}
        We first prove that tripartite intrinsic non-locality is superadditive in the sense of~\eqref{eq:superadd-ineq-1}, and then we prove it is subadditive when \eqref{eq:prod-form} holds. Additivity when \eqref{eq:prod-form} holds then follows as a consequence.
        
        First, let us prove superadditivity. To begin, let us consider states that arise from embedding an arbitrary no-signaling extension of $p(a_{1},a_{2},b_{1},b_{2},c_{1},c_{2}\vert x_{1},x_{2},y_{1},y_{2},z_{1},z_{2})$ into the following quantum state:
        \begin{multline}\label{ext}
            \zeta_{A_{1}B_{1}C_{1}A_{2}B_{2}C_{2}EX_{1}X_{2}Y_{1}Y_{2}Z_{1}Z_{2}} = \\ \sum_{\substack{a_{1},b_{1},c_{1},a_{2},b_{2},c_{2},\\
            x_{1},y_{1},z_{1},x_{2},y_{2},z_{2}}} q(x_{1},y_{1},z_{1},x_{2},y_{2},z_{2})p(a_{1},b_{1},c_{1},a_{2},b_{2},c_{2}\vert x_{1},y_{1},z_{1},x_{2},y_{2},z_{2})\\ [a_{1}b_{1}c_{1}a_{2}b_{2}c_{2}x_{1}y_{1}z_{1}x_{2}y_{2}z_{2}]_{A_{1}B_{1}C_{1}A_{2}B_{2}C_{2}X_{1}X_{2}Y_{1}Y_{2}Z_{1}Z_{2}}\otimes\rho_{E}^{a_{1}b_{1}c_{1}a_{2}b_{2}c_{2}x_{1}y_{1}z_{1}x_{2}y_{2}z_{2}}.
        \end{multline}
        We define the states $\tau$ and $\gamma$ to be the following arbitrary no-signaling extensions of $t$ and~$r$, respectively:
        \begin{multline}\label{ext_1}
            \tau_{A_{1}B_{1}C_{1}EX_{1}Y_{1}Z_{1}} = \sum_{a_{1},b_{1},c_{1},x_{1},y_{1},z_{1}} q(x_{1},y_{1},z_{1})t(a_{1},b_{1},c_{1}\vert x_{1},y_{1},z_{1})\\ [a_{1}b_{1}c_{1}x_{1}y_{1}z_{1}]_{A_{1}B_{1}C_{1}X_{1}Y_{1}Z_{1}}\otimes\rho_{E}^{a_{1}b_{1}c_{1}x_{1}y_{1}z_{1}},
        \end{multline}
        and
        \begin{multline}\label{ext_2}
            \gamma_{A_{2}B_{2}C_{2}EX_{2}Y_{2}Z_{2}} = \sum_{a_{2},b_{2},c_{2}, x_{2},y_{2},z_{2}} q(x_{2},y_{2},z_{2})r(a_{2},b_{2},c_{2}\vert x_{2},y_{2},z_{2})\\ [a_{2}b_{2}c_{2}x_{2}y_{2}z_{2}]_{A_{2}B_{2}C_{2}X_{2}Y_{2}Z_{2}}\otimes\rho_{E}^{a_{2}b_{2}c_{2}x_{2}y_{2}z_{2}}.
        \end{multline}
        Now, we use the chain rule from Theorem~\ref{theorem_2} to conclude that
        \begin{align}
            &I(A_{1}A_{2};B_{1}B_{2};C_{1}C_{2}\vert EX_{1}X_{2}Y_{1}Y_{2}Z_{1}Z_{2})_{\zeta}\nonumber\\
            &= I(A_{1};B_{1};C_{1}\vert EX_{1}X_{2}Y_{1}Y_{2}Z_{1}Z_{2}A_{2}B_{2}C_{2})_{\zeta} + I(A_{2};B_{2};C_{2}\vert EX_{1}X_{2}Y_{1}Y_{2}Z_{1}Z_{2})_{\zeta} \nonumber\\
            &\qquad+ I(A_{2}B_{2};C_{1}\vert EX_{1}X_{2}Y_{1}Y_{2}Z_{1}Z_{2}C_{2})_{\zeta} + I(B_{2}C_{2};A_{1}\vert EX_{1}X_{2}Y_{1}Y_{2}Z_{1}Z_{2}A_{2})_{\zeta} \nonumber\\
            &\qquad \qquad+ I(A_{2}C_{2};B_{1}\vert EX_{1}X_{2}Y_{1}Y_{2}Z_{1}Z_{2}B_{2})_{\zeta}
        \end{align}
        Since conditional mutual information is always non-negative, we conclude that
        \begin{multline}
            I(A_{1}A_{2};B_{1}B_{2};C_{1}C_{2}\vert EX_{1}X_{2}Y_{1}Y_{2}Z_{1}Z_{2})_{\zeta}\\
            \geq I(A_{1};B_{1};C_{1}\vert EX_{1}X_{2}Y_{1}Y_{2}Z_{1}Z_{2}A_{2}B_{2}C_{2})_{\zeta} + I(A_{2};B_{2};C_{2}\vert EX_{1}X_{2}Y_{1}Y_{2}Z_{1}Z_{2})_{\zeta}.  
        \end{multline}
        The state $\zeta_{A_{1}B_{1}C_{1}A_{2}B_{2}C_{2}EX_{1}X_{2}Y_{1}Y_{2}Z_{1}Z_{2}}$ is a valid no-signaling extension of  $t$ with extension systems $E X_2 Y_2 Z_2 A_2 B_2 C_2$, and the state $\zeta_{A_{2}B_{2}C_{2}EX_{1}X_{2}Y_{1}Y_{2}Z_{1}Z_{2}}$ is a valid no-signaling extension of $r$ with extension systems $E X_1 Y_1 Z_1$. So we conclude that
        \begin{align}
            &I(A_{1}A_{2};B_{1}B_{2};C_{1}C_{2}\vert EX_{1}X_{2}Y_{1}Y_{2}Z_{1}Z_{2})_{\zeta}\nonumber\\
            &\geq I(A_{1};B_{1};C_{1}\vert EX_{1}X_{2}Y_{1}Y_{2}Z_{1}Z_{2}A_{2}B_{2}C_{2})_{\zeta} + I(A_{2};B_{2};C_{2}\vert EX_{1}X_{2}Y_{1}Y_{2}Z_{1}Z_{2})_{\zeta}\\
            &\geq \inf_{\text{ext.~in }\eqref{ext_1}} I(A_{1};B_{1};C_{1}\vert EX_{1}Y_{1}Z_{1})_{\tau} + \inf_{\text{ext.~in }\eqref{ext_2}} I(A_{2};B_{2};C_{2}\vert EX_{2}Y_{2}Z_{2})_{\gamma}.   
        \end{align}
        Since the state $\zeta_{A_{1}B_{1}C_{1}A_{2}B_{2}C_{2}EX_{1}X_{2}Y_{1}Y_{2}Z_{1}Z_{2}}$ is an arbitrary no-signaling extension of $p$, we conclude that
        \begin{multline}
            \inf_{\text{ext.~in }\eqref{ext}} I(A_{1}A_{2};B_{1}B_{2};C_{1}C_{2}\vert EX_{1}X_{2}Y_{1}Y_{2}Z_{1}Z_{2})_{\zeta}
            \\
            \geq \inf_{\text{ext.~in }\eqref{ext_1}} I(A_{1};B_{1};C_{1}\vert EX_{1}Y_{1}Z_{1})_{\tau} + \inf_{\text{ext.~in }\eqref{ext_2}} I(A_{2};B_{2};C_{2}\vert EX_{2}Y_{2}Z_{2})_{\gamma}.  
        \end{multline}
        By optimizing over product input probability distributions, we have that
        \begin{multline}
            \sup_{q(x_1,y_1,z_1)q(x_2,y_2,z_2)} \inf_{\text{ext.~in }\eqref{ext}} I(A_{1}A_{2};B_{1}B_{2};C_{1}C_{2}\vert EX_{1}X_{2}Y_{1}Y_{2}Z_{1}Z_{2})_{\zeta}\\
            \geq \sup_{q(x_1,y_1,z_1)} \inf_{\text{ext.~in }\eqref{ext_1}} I(A_{1};B_{1};C_{1}\vert EX_{1}Y_{1}Z_{1})_{\tau} +\\ \sup_{q(x_2,y_2,z_2)} \inf_{\text{ext.~in }\eqref{ext_2}} I(A_{2};B_{2};C_{2}\vert EX_{2}Y_{2}Z_{2})_{\gamma}.  
        \end{multline}
        Hence, by optimizing the left-hand side over all input probability distributions, we conclude that
        \begin{align}\label{sup}
            N(A_{1}A_{2};B_{1}B_{2};C_{1}C_{2})_{p} \geq N(A_{1};B_{1};C_{1})_{t} + N(A_{2};B_{2};C_{2})_{r}.
        \end{align}
        This concludes the proof of superadditivity (i.e., the proof of \eqref{eq:superadd-ineq-1}).

        Let us prove subadditivity when \eqref{eq:prod-form} holds; i.e., let us prove that
        \begin{align}
            N(A_{1}A_{2};B_{1}B_{2};C_{1}C_{2})_{p} \leq N(A_{1};B_{1};C_{1})_{t} + N(A_{2};B_{2};C_{2})_{r}.
        \end{align}
        Consider the following quantum embeddings:
        \begin{multline}\label{ext1}
            \zeta_{A_{1}B_{1}C_{1}A_{2}B_{2}C_{2}EX_{1}X_{2}Y_{1}Y_{2}Z_{1}Z_{2}} =\\ \sum_{\substack{a_{1},b_{1},c_{1},a_{2},b_{2},c_{2},\\x_{1},y_{1},z_{1},x_{2},y_{2},z_{2}}} q(x_{1},y_{1},z_{1},x_{2},y_{2},z_{2})t(a_{1},b_{1},c_{1}\vert x_{1},y_{1},z_{1}) r(a_{2},b_{2},c_{2}\vert x_{2},y_{2},z_{2})\\ [a_{1}b_{1}c_{1}a_{2}b_{2}c_{2}x_{1}y_{1}z_{1}x_{2}y_{2}z_{2}]_{A_{1}B_{1}C_{1}A_{2}B_{2}C_{2}X_{1}X_{2}Y_{1}Y_{2}Z_{1}Z_{2}}\otimes\zeta_{E}^{a_{1}b_{1}c_{1}x_{1}y_{1}z_{1}a_{2}b_{2}c_{2}x_{2}y_{2}z_{2}}
        \end{multline}
        \begin{multline}\label{ext6}
            \rho_{A_{1}B_{1}C_{1}A_{2}B_{2}C_{2}X_{1}X_{2}Y_{1}Y_{2}Z_{1}Z_{2}E_{1}E_{2}} =\\ \sum_{\substack{a_{1},b_{1},c_{1},a_{2},b_{2},c_{2},\\x_{1},y_{1},z_{1},x_{2},y_{2},z_{2}}} q(x_{1},y_{1},z_{1},x_{2},y_{2},z_{2})t(a_{1},b_{1},c_{1}\vert x_{1},y_{1},z_{1}) r(a_{2},b_{2},c_{2}\vert x_{2},y_{2},z_{2})\\ [a_{1}b_{1}c_{1}a_{2}b_{2}c_{2}x_{1}y_{1}z_{1}x_{2}y_{2}z_{2}]_{A_{1}B_{1}C_{1}A_{2}B_{2}C_{2}X_{1}X_{2}Y_{1}Y_{2}Z_{1}Z_{2}}\otimes\rho_{E_{1}}^{a_{1}b_{1}c_{1}x_{1}y_{1}z_{1}}\otimes\rho_{E_{2}}^{a_{2}b_{2}c_{2}x_{2}y_{2}z_{2}},
        \end{multline}
        \begin{multline}\label{ext2}
            \tau_{A_{1}B_{1}C_{1}EX_{1}Y_{1}Z_{1}} =\\ \sum_{a_{1},b_{1},c_{1},x_{1},y_{1},z_{1}} q(x_{1},y_{1},z_{1})t(a_{1},b_{1},c_{1}\vert x_{1},y_{1},z_{1})[a_{1}b_{1}c_{1}x_{1}y_{1}z_{1}]_{A_{1}B_{1}C_{1}X_{1}Y_{1}Z_{1}}\otimes\rho_{E_{1}}^{a_{1}b_{1}c_{1}x_{1}y_{1}z_{1}},
        \end{multline}
        and 
        \begin{multline}\label{ext3}
            \gamma_{A_{2}B_{2}C_{2}EX_{2}Y_{2}Z_{2}} =\\ \sum_{a_{2},b_{2},c_{2},x_{2},y_{2},z_{2}} q(x_{2},y_{2},z_{2})r(a_{2},b_{2},c_{2}\vert x_{2},y_{2},z_{2})[a_{2}b_{2}c_{2}x_{2}y_{2}z_{2}]_{A_{2}B_{2}C_{2}X_{2}Y_{2}Z_{2}}\otimes\rho_{E_{2}}^{a_{2}b_{2}c_{2}x_{2}y_{2}z_{2}}.
        \end{multline}
        All the extensions above are no-signaling extensions. Consider that
        \begin{multline}
            \inf_{\text{ext.~in }\eqref{ext1}} I(A_{1}A_{2};B_{1}B_{2};C_{1}C_{2}\vert EX_{1}X_{2}Y_{1}Y_{2}Z_{1}Z_{2})_{\zeta}\\
            \leq I(A_{1}A_{2};B_{1}B_{2};C_{1}C_{2}\vert E_{1}E_{2}X_{1}X_{2}Y_{1}Y_{2}Z_{1}Z_{2})_{\rho}.
        \end{multline}
        Using the chain rule from Theorem~\ref{theorem_2}, we find that
        \begin{align}
        & I(A_{1}A_{2};B_{1}B_{2};C_{1}C_{2}\vert E_{1}E_{2}X_{1}X_{2}Y_{1}Y_{2}Z_{1}Z_{2})_{\rho} \notag \\
            &= I(A_{1};B_{1};C_{1}\vert E_{1}E_{2}X_{1}X_{2}Y_{1}Y_{2}Z_{1}Z_{2}A_{2}B_{2}C_{2})_{\rho} + I(A_{2};B_{2};C_{2}\vert E_{1}E_{2}X_{1}X_{2}Y_{1}Y_{2}Z_{1}Z_{2})_{\rho} \nonumber\\
            &\qquad + I(A_{2}B_{2};C_{1}\vert E_{1}E_{2}X_{1}X_{2}Y_{1}Y_{2}Z_{1}Z_{2}C_{2})_{\rho} + I(B_{2}C_{2};A_{1}\vert E_{1}E_{2}X_{1}X_{2}Y_{1}Y_{2}Z_{1}Z_{2}A_{2})_{\rho} \nonumber\\
            &\qquad \qquad + I(A_{2}C_{2};B_{1}\vert E_{1}E_{2}X_{1}X_{2}Y_{1}Y_{2}Z_{1}Z_{2}B_{2})_{\rho}.
        \end{align}
        We can write $I(A_{2}C_{2};B_{1}\vert E_{1}E_{2}X_{1}X_{2}Y_{1}Y_{2}Z_{1}Z_{2}B_{2})_{\rho}$ as follows:
        \begin{align}
            &I(A_{2}C_{2};B_{1}\vert E_{1}E_{2}X_{1}X_{2}Y_{1}Y_{2}Z_{1}Z_{2}B_{2})_{\rho}\notag \\
            &= H(A_{2}C_{2}\vert E_{1}E_{2}X_{1}X_{2}Y_{1}Y_{2}Z_{1}Z_{2}B_{2})_{\rho} -  H(A_{2}C_{2}\vert E_{1}E_{2}X_{1}X_{2}Y_{1}Y_{2}Z_{1}Z_{2}B_{2}B_{1})_{\rho}\\
            &= \sum_{x_{1},y_{1},z_{1},x_{2},y_{2},z_{2}} p(x_{1},y_{1},z_{1},x_{2},y_{2},z_{2})[H(A_{2}C_{2}\vert E_{1}E_{2}B_{2})_{\eta^{x_{1}y_{1}z_{1}x_{2}y_{2}z_{2}}} \notag \\  &\qquad\qquad\qquad\qquad\qquad\qquad\qquad\qquad\qquad -H(A_{2}C_{2}\vert E_{1}E_{2}B_{2}B_{1})_{\eta^{x_{1}y_{1}z_{1}x_{2}y_{2}z_{2}}}],
        \end{align}
        where, due to the no-signaling constraints on $p$, $t$, and $r$, we can write 
        \begin{equation}
        \eta^{x_{1}y_{1}z_{1}x_{2}y_{2}z_{2}}_{B_{1}A_{2}B_{2}C_{2}E_{1}E_{2}} = \sum_{b_{1}} t(b_{1}\vert y_{1})[b_{1}]\otimes\rho_{E_{1}}^{b_{1} y_{1}}\otimes\sum_{a_{2},b_{2},c_{2}} r(a_{2},b_{2},c_{2}\vert x_{2},y_{2},z_{2})[a_{2}b_{2}c_{2}]\otimes\rho_{E_{2}}^{a_{2}b_{2}c_{2}x_{2}y_{2}z_{2}},
            \end{equation}
            and
            \begin{equation}            \eta^{x_{1}y_{1}z_{1}x_{2}y_{2}z_{2}}_{A_{2}B_{2}C_{2}E_{1}E_{2}} = \sum_{a_{2},b_{2},c_{2}} r(a_{2},b_{2},c_{2}\vert x_{2},y_{2},z_{2})[a_{2}b_{2}c_{2}]\otimes\rho_{E_{2}}^{a_{2}b_{2}c_{2}x_{2}y_{2}z_{2}}\otimes\rho_{E_{1}},   
            \end{equation}
        where
        \begin{align}
        \rho^{b_1  y_1 }_{E_{1}} & = \sum_{a_1,  c_1} t(a_1,b_1, c_1\vert x_1, y_1, z_1)\rho^{a_1 b_1 c_1 x_1 y_1 z_1}_{E_{1}},\\
        \rho_{E_{1}} & = \sum_{a_1, b_1, c_1} t(a_1,b_1, c_1\vert x_1, y_1, z_1)\rho^{a_1 b_1 c_1 x_1 y_1 z_1}_{E_{1}}.    
        \end{align}
        From the above definitions, we can conclude that
        \begin{align}
            H(A_{2}C_{2}\vert E_{1}E_{2}B_{2}B_{1})_{\eta^{x_{1}y_{1}z_{1}x_{2}y_{2}z_{2}}} = H(A_{2}C_{2}\vert E_{1}E_{2}B_{2})_{\eta^{x_{1}y_{1}z_{1}x_{2}y_{2}z_{2}}}.
        \end{align}
        Hence,
        \begin{multline}
            I(A_{2}C_{2};B_{1}\vert E_{1}E_{2}X_{1}X_{2}Y_{1}Y_{2}Z_{1}Z_{2}B_{2})_{\rho} \\
            = H(A_{2}C_{2}\vert E_{1}E_{2}X_{1}X_{2}Y_{1}Y_{2}Z_{1}Z_{2}B_{2})_{\rho} -  H(A_{2}C_{2}\vert E_{1}E_{2}X_{1}X_{2}Y_{1}Y_{2}Z_{1}Z_{2}B_{2})_{\rho}= 0.
        \end{multline}
        The quantities $I(B_{2}C_{2};A_{1}\vert E_{1}E_{2}X_{1}X_{2}Y_{1}Y_{2}Z_{1}Z_{2}A_{2})_{\rho}$ and $I(A_{2}C_{2};B_{1}\vert E_{1}E_{2}X_{1}X_{2}Y_{1}Y_{2}Z_{1}Z_{2}B_{2})_{\rho}$ are equal to zero using similar arguments. This leads to the following conclusion:
        \begin{align}
            &\inf_{\text{ext.~in }\eqref{ext1}}I(A_{1}A_{2};B_{1}B_{2};C_{1}C_{2}\vert EX_{1}X_{2}Y_{1}Y_{2}Z_{1}Z_{2})_{\zeta}\nonumber\\
            &\leq I(A_{1}A_{2};B_{1}B_{2};C_{1}C_{2}\vert E_{1}E_{2}X_{1}X_{2}Y_{1}Y_{2}Z_{1}Z_{2})_{\rho}\nonumber\\
            &=I(A_{1};B_{1};C_{1}\vert E_{1}E_{2}X_{1}X_{2}Y_{1}Y_{2}Z_{1}Z_{2}A_{2}B_{2}C_{2})_{\rho} + I(A_{2};B_{2};C_{2}\vert E_{1}E_{2}X_{1}X_{2}Y_{1}Y_{2}Z_{1}Z_{2})_{\rho} \notag \\
            & = 
            I(A_{1};B_{1};C_{1}\vert E_{1}X_{1}Y_{1}Z_{1})_{\tau} + I(A_{2};B_{2};C_{2}\vert E_{2}X_{2}Y_{2}Z_{2})_{\gamma},
        \end{align}
        where the last line follows from  the structure of the state in \eqref{ext6} and the fact that the extension is a no-signaling extension.
        Since the no-signaling extensions $\tau$ and $\gamma$ are arbitrary, we conclude that
        \begin{multline}
            \inf_{\text{ext.~in }\eqref{ext1}} I(A_{1}A_{2};B_{1}B_{2};C_{1}C_{2}\vert EX_{1}X_{2}Y_{1}Y_{2}Z_{1}Z_{2})_{\zeta}\\
            \leq \inf_{\text{ext.~in }\eqref{ext2}} I(A_{1};B_{1};C_{1}\vert EX_{1}Y_{1}Z_{1})_{\tau} + \inf_{\text{ext.~in }\eqref{ext3}} I(A_{2};B_{2};C_{2}\vert EX_{2}Y_{2}Z_{2})_{\gamma}.
        \end{multline}
        Now optimizing over arbitrary input probability distributions, we find that
        \begin{multline}
            \sup_{q} \inf_{\text{ext.~in }\eqref{ext1}} I(A_{1}A_{2};B_{1}B_{2};C_{1}C_{2}\vert EX_{1}X_{2}Y_{1}Y_{2}Z_{1}Z_{2})_{\zeta}\\
            \leq\sup_{q} \inf_{\text{ext.~in }\eqref{ext2}} I(A_{1};B_{1};C_{1}\vert EX_{1}Y_{1}Z_{1})_{\tau} + \sup_{q} \inf_{\text{ext.~in }\eqref{ext3}} I(A_{2};B_{2};C_{2}\vert EX_{2}Y_{2}Z_{2})_{\gamma}.
        \end{multline}
        Hence,
        \begin{equation}\label{sub}
            N(A_{1}A_{2};B_{1}B_{2};C_{1}C_{2})_{p} \leq N(A_{1};B_{1};C_{1})_{t} + N(A_{2};B_{2};C_{2})_{r}.
        \end{equation}
        Putting together \eqref{sup} and \eqref{sub}, we have established additivity (i.e., we have proven \eqref{eq:additivity-int-non-loc}).
    \end{proof}

    \begin{theorem}[Additivity of QTINL]\label{theorem:additivity-q}
        Let $p(a_{1},a_{2},b_{1},b_{2},c_{1},c_{2}\vert x_{1},x_{2},y_{1},y_{2},z_{1},z_{2})$ be a quantum correlation for which  no-signaling constraints hold for all parties. For example, the no-signaling constraints for Alice are as follows:
        \begin{align}
            \sum_{a_{1}}p(a_{1},a_{2},b_{1},b_{2},c_{1},c_{2}&\vert x_{1},x_{2},y_{1},y_{2},z_{1},z_{2}) \nonumber\\ &= \sum_{a_{1}}p(a_{1},a_{2},b_{1},b_{2},c_{1},c_{2}\vert \bar{x}_{1},x_{2},y_{1},y_{2},z_{1},z_{2}) \quad \forall x_{1},\bar{x}_{1},\\
            \sum_{a_{2}}p(a_{1},a_{2},b_{1},b_{2},c_{1},c_{2}&\vert x_{1},x_{2},y_{1},y_{2},z_{1},z_{2}) \nonumber\\ &= \sum_{a_{2}}p(a_{1},a_{2},b_{1},b_{2},c_{1},c_{2}\vert x_{1},\bar{x}_{2},y_{1},y_{2},z_{1},z_{2}) \quad \forall x_{2},\bar{x}_{2}.
        \end{align}
        Suppose that similar constraints hold for Bob and Charlie as well.
        Let $t(a_{1},b_{1},c_{1}\vert x_{1},y_{1},z_{1})$ and $r(a_{2},b_{2},c_{2}\vert x_{2},y_{2},z_{2})$ be quantum correlations corresponding to the marginals of $p$. Then the quantum intrinsic non-locality is superadditive, in the sense that
        \begin{align}
        \label{eq:q-superadd-ineq-1}
            N_Q(A_{1}A_{2};B_{1}B_{2};C_{1}C_{2})_{p} \geq N_Q(A_{1};B_{1};C_{1})_{t} + N_Q(A_{2};B_{2};C_{2})_{r}.
        \end{align}
        If \begin{equation}
            p(a_{1},a_{2},b_{1},b_{2},c_{1},c_{2}\vert x_{1},x_{2},y_{1},y_{2},z_{1},z_{2}) = t(a_{1},b_{1},c_{1}\vert x_{1},y_{1},z_{1})r(a_{2},b_{2},c_{2}\vert x_{2},y_{2},z_{2}),
            \label{eq:q-prod-form}
        \end{equation}
        then the quantum intrinsic non-locality is additive in the following sense:
        \begin{align}
            N_Q(A_{1}A_{2};B_{1}B_{2};C_{1}C_{2})_{p} = N_Q(A_{1};B_{1};C_{1})_{t} + N_Q(A_{2};B_{2};C_{2})_{r}.
            \label{eq:additivity-q-int-non-loc}
        \end{align}
    \end{theorem}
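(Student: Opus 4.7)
The plan is to proceed in almost complete parallel with the proof of Theorem~\ref{theorem:additivity}, replacing ``no-signaling extension'' by ``quantum extension'' throughout. The two directions---superadditivity \eqref{eq:q-superadd-ineq-1} and, under the product hypothesis \eqref{eq:q-prod-form}, subadditivity---combine to yield the additivity statement \eqref{eq:additivity-q-int-non-loc}. The only substantive difference from the no-signaling case is that I must check, at each step, that the extensions constructed actually lie in the smaller feasible set defined by \eqref{q_ext}.

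For superadditivity, I would start from an arbitrary quantum extension of a realization of $p$, specified by an underlying state $\rho_{\tilde{A}_1\tilde{B}_1\tilde{C}_1\tilde{A}_2\tilde{B}_2\tilde{C}_2 E}$ together with the parties' POVMs, and embed inputs/outputs classically as in \eqref{ext}. I would then apply Theorem~\ref{theorem_2}, viewing $E X_1 X_2 Y_1 Y_2 Z_1 Z_2$ as the conditioning system, to decompose $I(A_1 A_2 ; B_1 B_2 ; C_1 C_2 \vert E X_1 X_2 Y_1 Y_2 Z_1 Z_2)_{\zeta}$ into the two ``intrinsic'' terms plus three nonnegative cross terms, which I drop. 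The key check is that the ``1'' marginal of $\zeta$, with Eve's system enlarged to include the classical ``2'' registers, is itself a quantum extension of $t$: the underlying state $\rho_{\tilde{A}_1\tilde{B}_1\tilde{C}_1\tilde{A}_2\tilde{B}_2\tilde{C}_2 E}$ together with the ``1'' POVMs is a valid realization of $t$ with Eve holding the quantum system $\tilde{A}_2 \tilde{B}_2 \tilde{C}_2 E$, and Eve subsequently executing her own ``2'' POVMs on her share is a local operation on her side, which (by a Stinespring dilation) preserves quantum-extension status. The symmetric argument yields \eqref{eq:q-superadd-ineq-1}.

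For subadditivity under \eqref{eq:q-prod-form}, I would pick arbitrary realizations of $t$ and $r$ with underlying quantum-extension states $\rho_{\tilde{A}_1\tilde{B}_1\tilde{C}_1 E_1}$ and $\rho_{\tilde{A}_2\tilde{B}_2\tilde{C}_2 E_2}$, and form the product state $\rho_{\tilde{A}_1\tilde{B}_1\tilde{C}_1 E_1} \otimes \rho_{\tilde{A}_2\tilde{B}_2\tilde{C}_2 E_2}$ with the tensor-product POVMs on the Alice--Bob--Charlie side. This is manifestly a quantum extension of $p = t \cdot r$ and hence is feasible for the infimum defining $N_Q(A_1 A_2 ; B_1 B_2 ; C_1 C_2)_p$; the resulting $\rho$ has the form \eqref{ext6}. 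Applying Theorem~\ref{theorem_2} and exploiting the product structure to verify the vanishing of the three cross terms (exactly as in the proof of Theorem~\ref{theorem:additivity}, since that argument relied only on factorization across the ``1''/``2'' blocks including Eve, not on the extension being merely no-signaling), leaves the sum $I(A_1;B_1;C_1 \vert E_1 X_1 Y_1 Z_1)_{\tau} + I(A_2;B_2;C_2 \vert E_2 X_2 Y_2 Z_2)_{\gamma}$. Taking infima over $\tau$ and $\gamma$ and then suprema over product input distributions yields the matching inequality, completing the proof.

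The main obstacle will be carefully verifying that both constructions---enlarging Eve's system with classical registers arising from her own measurements on the superadditive side, and tensoring two quantum extensions on the subadditive side---produce objects satisfying \eqref{q_ext} rather than merely the weaker no-signaling condition. Neither step is deep, but each requires a bit more bookkeeping than in the no-signaling case; once that bookkeeping is in place, every remaining step (the chain rule from Theorem~\ref{theorem_2}, the vanishing of cross terms, and the standard $\sup$--$\inf$ manipulations) carries over verbatim from Theorem~\ref{theorem:additivity}.
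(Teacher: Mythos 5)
Your proposal is correct and matches the paper's approach: the paper's proof of Theorem~\ref{theorem:additivity-q} is exactly the one-line instruction to repeat the argument of Theorem~\ref{theorem:additivity} with ``appropriate quantum extensions,'' which is precisely what you do. The bookkeeping you flag---that enlarging Eve's system by her own classically controlled measurement outcomes on the superadditive side, and tensoring two quantum extensions on the subadditive side, each remain within the feasible set \eqref{q_ext}---is exactly the (unstated) content of that one line, and your verifications of both points are sound.
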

    \begin{proof}
        The proof follows by using similar techniques as Theorem~\ref{theorem:additivity} and by taking appropriate quantum extensions.
    \end{proof}

    \section{Multipartite Intrinsic Non-Locality}

    \label{sec:mul-int-non}
    
        We now generalize the tripartite case to the multipartite case. Henceforth, we denote the $i$th input to the measurement device by $x_{i}$, and we denote the outcome of a measurement by $a_{i}$, where $i\in\{1, \ldots, M\}$ and $M$ is the number of parties involved.         Now, we can define multipartite intrinsic non-locality, using conditional total correlation, for a no-signaling correlation as follows:
        \begin{definition}
            Let $p(a_{1},\ldots,a_{M}\vert x_{1},\ldots,x_{M})$ be a no-signaling correlation. Multipartite intrinsic non-locality of $p$ is defined as
            \begin{align}
                N(A_{1};\cdots;A_{M})_{p} \coloneqq \frac{1}{M-1}\sup_{q(x_{1},\ldots,x_{M})}\inf_{\rho_{A_{1}\cdots A_{M}X_{1}\cdots X_{M}E}}I(A_{1};\cdots;A_{M}\vert E X_{1}\cdots X_{M})_{\rho},
            \end{align}
            where $q(x_{1},\ldots,x_{M})$ is a probability distribution for the inputs of the Alices, and the state $\rho_{A_{1}\cdots A_{M}X_{1}\cdots X_{M}E}$ is a no-signaling extension of the state shared by the Alices, given by
            \begin{multline}\label{statemult}
                \rho_{A_{1}\cdots A_{M}X_{1}\cdots X_{M}} = \sum_{a_{1},\ldots,a_{M},x_{1},\ldots,x_{M}}q(x_{1},\ldots,x_{M})p(a_{1},\ldots,a_{M}\vert x_{1},\ldots,x_{M})\\
                [a_{1},\ldots,a_{M},x_{1},\ldots,x_{M}]_{A_{1}\cdots A_{M}X_{1} \cdots X_{M}}\otimes\rho_{E}^{a_{1},\ldots,a_{M},x_{1},\ldots,x_{M}}.
            \end{multline}        
        \end{definition}
        
        We define quantum multipartite quantum intrinsic non-locality, based on conditional total correlation, for a quantum correlation as follows:
        
        \begin{definition}
            Multipartite quantum intrinsic non-locality of  $p(a_{1},\ldots,a_{M}\vert x_{1},\ldots,x_{M})$, a quantum correlation, is defined as
            \begin{align}
                N_{Q}(A_{1};\cdots;A_{M})_{p} \coloneqq  \frac{1}{M-1}\sup_{q(x_{1},\ldots,x_{M})}\inf_{\rho_{A_{1}\cdots A_{M}X_{1}\cdots X_{M}E}}I(A_{1};\cdots;A_{M}\vert E X_{1}\cdots X_{M})_{\rho},
            \end{align}
            where $q(x_{1},\ldots,x_{M})$ is a probability distribution for generating the inputs used by the Alices and $\rho_{A_{1}\cdots A_{M}X_{1}\cdots X_{M}E}$ is a quantum extension of the state shared by Alices, given by
            \begin{multline}\label{statemult-q}
                \rho_{A_{1}\cdots A_{M}X_{1}\cdots X_{M}} = \sum_{a_{1},\ldots,a_{M},x_{1},\ldots,x_{M}}q(x_{1},\ldots,x_{M})p(a_{1},\ldots,a_{M}\vert x_{1},\ldots,x_{M})\\
                [a_{1},\ldots,a_{M},x_{1},\ldots,x_{M}]_{A_{1}\cdots A_{M}X_{1}\cdots X_{M}}\otimes\rho_{E}^{a_{1},\ldots,a_{M},x_{1},\ldots,x_{M}}.
            \end{multline}        
        \end{definition}
        
        We now derive a chain rule for the quantity $I(A_{1,1}A_{1,2};\cdots;A_{i,1}A_{i,2};\cdots;A_{M,1}A_{M,2}\vert E)$ similar to that in Theorem~\ref{theorem_2}. In doing so, we generalize \eqref{eqnA} to every finite $M$ such that we can  prove additivity and other relevant properties of multipartite (quantum) intrinsic non-locality. Let us define $[M] \coloneqq  \{1,2, \ldots ,m\}$ and  $A_{\{i, \ldots, M\},j} \equiv A_{i,j} \cdots A_{M,j}$. 
        \begin{theorem}\label{theorem_3}
            For every multipartite state $\rho_{A_{1,1}A_{1,2}\cdots A_{i,1}A_{i,2}\cdots A_{M,1}A_{M,2}E}$, the following equality holds:
            \begin{multline}\label{crulemult}
                I(A_{1,1}A_{1,2};\cdots;A_{i,1}A_{i,2};\cdots;A_{M,1}A_{M,2}\vert E) 
                = \\
                I(A_{1,2};\cdots;A_{M,2}\vert E) + I(A_{1,1};\cdots;A_{M,1}\vert EA_{[M],2})+\sum_{i=1}^{M} I(A_{i,1};A_{[M]\backslash \{i\},2}\vert EA_{i,2}).
            \end{multline}
        \end{theorem}
        
        \begin{proof}
        By applying definitions and the chain rule for conditional entropy, we find that
        \begin{align}
            & I(A_{1,1}A_{1,2};A_{2,1}A_{2,2};\cdots;A_{M,1}A_{M,2}\vert E)\nonumber\\
            & =\sum_{i=1}^{M}H(A_{i,1}A_{i,2}\vert E)-H(A_{1,1}A_{1,2}A_{2,1}A_{2,2}\cdots
                A_{M,1}A_{M,2}\vert E)\\
            & =\sum_{i=1}^{M}\left[  H(A_{i,2}\vert E)+H(A_{i,1}\vert EA_{i,2})\right]  \nonumber\\
            & \qquad-\left[  H(A_{1,2}A_{2,2}\cdots A_{M,2}\vert E)-H(A_{1,1}A_{2,1}\cdots
            A_{M,1}\vert EA_{1,2}A_{2,2}\cdots A_{M,2})\right]  \\
            & =I(A_{1,2};A_{2,2};\cdots;A_{M,2}\vert E) +\sum_{i=1}^{M}H(A_{i,1}\vert EA_{i,2}
            )-H(A_{1,1}A_{2,1}\cdots A_{M,1}\vert EA_{[M],2}).
        \end{align}
        Continuing, we find that
        \begin{align}
            & \sum_{i=1}^{M}H(A_{i,1}\vert EA_{i,2})-H(A_{1,1}A_{2,1}\cdots A_{M,1}
            \vert E A_{[M],2})\nonumber\\
            & =\sum_{i=1}^{M} \left[H(A_{i,1}\vert EA_{i,2})-H(A_{i,1}\vert EA_{\left[  M\right],2})+H(A_{i,1}\vert EA_{\left[  M\right]  ,2})\right]\nonumber\\
            & \qquad-H(A_{1,1}A_{2,1}\cdots A_{M,1}\vert EA_{[M],2})\\
            & =\sum_{i=1}^{M}I(A_{i,1};A_{\left[  M\right]  \backslash\left\{  i\right\},2}\vert EA_{i,2})+I(A_{1,1};A_{2,1};\cdots;A_{M,1}\vert EA_{[M],2}).
        \end{align}
    This concludes the proof.
            \end{proof}

        Now, let us note that if we consider the particular case when $M=3$, we recover the exact form obtained earlier in \eqref{eqnA}. Then, we can extend the arguments presented for the tripartite case to obtain additivity, convexity, and monotonicity under LOCR for multipartite intrinsic non-locality and multipartite quantum intrinsic non-locality, primarily due to the structure of \eqref{crulemult} producing similar terms for every finite $M$. 
        
    \section{Dual Multipartite Intrinsic Non-Locality}
    
    \label{sec:dual-total}
    
        Until now, we have defined multipartite intrinsic non-locality based on conditional total correlation. As noted earlier, total correlation is just one possible generalization of mutual information that has found uses in quantum information. Dual total correlation is another $M$-partite generalization of mutual information, first introduced in \cite{Han75,Han78}. Both total correlation and dual total correlation correspond to mutual information for the bipartite scenario. Since a distinction between total correlation and dual total correlation would only arise in the multipartite scenario, it is worthwhile to discuss the multipartite intrinsic non-locality based on conditional dual total correlation to note the differences in quantities that arise and compare the two quantities.
        
        In this section, we discuss multipartite intrinsic non-locality based on dual total correlation. Conditional dual total correlation is the conditional version of dual total correlation, and it has been previously used in various multipartite scenarios in quantum information \cite{YHHHOS09,PhysRevLett.101.140501}. Conditional dual total correlation of a state $\rho_{A_{1}\cdots  A_{M}E}$ is defined as
        
        \begin{equation}
            \widetilde{I}(A_{1};\cdots;A_{M}\vert E) \coloneqq \sum_{i=1}^{m}H(A_{[M]\backslash \{i\}}\vert E)-(m-1)H(A_{1}\cdots A_{M}\vert E).
        \end{equation}
        The chain rule for conditional dual total correlation is as follows:
        \begin{equation}
            \widetilde{I}(BA_{1};A_{2};\cdots;A_{M}\vert E) = \widetilde{I}(A_{1};A_{2};\cdots;A_{M}\vert BE) + I(B;A_{2}\cdots A_{M}\vert E).
        \end{equation}
        We now define the multipartite intrinsic non-locality based on conditional dual total correlation, and we refer to it as dual multipartite intrinsic non-locality:
        \begin{definition}
            Dual multipartite intrinsic non-locality of a no-signaling correlation  \\ $p(a_{1},\ldots,a_{M}\vert x_{1},\ldots,x_{M})$ is defined as
            \begin{align}
                \widetilde{N}(A_{1};\cdots;A_{M})_{p} \coloneqq \sup_{q(x_{1},\ldots,x_{M})}\inf_{\rho_{A_{1}\cdots A_{M}X_{1} \cdots X_{M}E}}\widetilde{I}(A_{1};\cdots;A_{M}\vert E X_{1}\cdots X_{M})_{\rho},
            \end{align}
            where $q(x_{1},\ldots,x_{M})$ is a probability distribution for the inputs of the Alices, and the state $\rho_{A_{1} \cdots A_{M}X_{1} \cdots X_{M}E}$ is a no-signaling extension of the state shared by the Alices, given by
            \begin{multline}
                \rho_{A_{1}\cdots A_{M}X_{1}\cdots X_{M}} = \sum_{a_{1},\ldots,a_{M},x_{1},\ldots,x_{M}}q(x_{1},\ldots,x_{M})p(a_{1},\ldots,a_{M}\vert x_{1},\ldots,x_{M})\\
                [a_{1},\ldots,a_{M},x_{1},\ldots,x_{M}]_{A_{1} \cdots  A_{M}X_{1} \cdots X_{M}}\otimes\rho_{E}^{a_{1},\ldots,a_{M},x_{1},\ldots,x_{M}}.
            \end{multline}        
        \end{definition}
        
        We define dual multipartite quantum intrinsic non-locality for a quantum correlation as follows:
        
        \begin{definition}
            Dual multipartite quantum intrinsic non-locality of  $p(a_{1},\ldots,a_{M}\vert x_{1},\ldots,x_{M})$, a quantum correlation, is defined as
            \begin{align}
                \widetilde{N}_{Q}(A_{1};\cdots;A_{M})_{p} \coloneqq  \sup_{q(x_{1},\ldots,x_{M})}\inf_{\rho_{A_{1}\cdots A_{M}X_{1}\cdots X_{M}E}}\widetilde{I}(A_{1};\cdots;A_{M}\vert E X_{1}\cdots X_{M})_{\rho},
            \end{align}
            where $q(x_{1},\ldots,x_{M})$ is a probability distribution that generates the inputs used by the Alices and $\rho_{A_{1}\cdots A_{M}X_{1} \cdots X_{M}E}$ is a quantum extension of the state shared by Alices, given by
            \begin{multline}
                \rho_{A_{1}\cdots A_{M}X_{1}\cdots X_{M}} = \sum_{a_{1},\ldots,a_{M},x_{1},\ldots,x_{M}}q(x_{1},\ldots,x_{M})p(a_{1},\ldots,a_{M}\vert x_{1},\ldots,x_{M})\\
                [a_{1},\ldots,a_{M},x_{1},\ldots,x_{M}]_{A_{1} \cdots A_{M}X_{1} \cdots X_{M}}\otimes\rho_{E}^{a_{1},\ldots,a_{M},x_{1},\ldots,x_{M}}.
            \end{multline}        
        \end{definition}
        
        We now derive a chain rule for the quantity $\widetilde{I}(A_{1,1}A_{1,2};\cdots;A_{i,1}A_{i,2};\cdots;A_{M,1}A_{M,2}\vert E)$ similar to that in Theorem~\ref{theorem_2}. In doing so, we generalize \eqref{crulemult} to conditional dual total correlation and every finite $M$, such that we can prove additivity and other relevant properties of dual multipartite (quantum) intrinsic non-locality. 
    
        \begin{theorem}\label{theorem:dual-total}
            For every multipartite state $\rho_{A_{1,1}A_{1,2}\cdots A_{i,1}A_{i,2}\cdots A_{M,1}A_{M,2}E}$, the following equality holds:
            \begin{multline}\label{crulemult:dual-total}
                \widetilde{I}(A_{1,1}A_{1,2};\cdots;A_{i,1}A_{i,2};\cdots;A_{M,1}A_{M,2}\vert E) = 
                \widetilde{I}(A_{1,2};\cdots;A_{M,2}\vert E) \\
                + \widetilde{I}(A_{1,1};\cdots;A_{M,1}\vert EA_{[M],2})+\sum_{i=1}^{M} I(A_{[M]\backslash \{i\},1};A_{i,2}\vert EA_{[M]\backslash \{i\},2}).
            \end{multline}
        \end{theorem}
        \begin{proof}
            By applying definitions and the chain rule for conditional entropy, we find that
            \begin{align}
                & \widetilde{I}(A_{1,1}A_{1,2};A_{2,1}A_{2,2};\cdots;A_{M,1}A_{M,2}\vert E)\nonumber\\
                & =\sum_{i=1}^{M}H(A_{[M]\backslash \{i\},1}A_{[M]\backslash \{i\},2}\vert E)-(m-1)H(A_{1,1}A_{1,2}A_{2,1}A_{2,2}\cdots
                    A_{M,1}A_{M,2}\vert E)\\
                & =\sum_{i=1}^{M}\left[  H(A_{[M]\backslash \{i\},2}\vert E)+H(A_{[M]\backslash \{i\},1}\vert EA_{[M]\backslash \{i\},2})\right]  \nonumber\\
                & \qquad-(m-1)\left[  H(A_{1,2}A_{2,2}\cdots A_{M,2}\vert E)-H(A_{1,1}A_{2,1}\cdots
                A_{M,1}\vert EA_{1,2}A_{2,2}\cdots A_{M,2})\right]  \\
                & =\widetilde{I}(A_{1,2};A_{2,2};\cdots;A_{M,2}\vert E) \notag\\
                &\qquad\qquad +\sum_{i=1}^{M}H(A_{[M]\backslash \{i\},1}\vert EA_{[M]\backslash \{i\},2}
                )-(m-1)H(A_{1,1}A_{2,1}\cdots A_{M,1}\vert EA_{[M],2}).
            \end{align}
            Continuing, we find that
            \begin{align}
                &\sum_{i=1}^{M}H(A_{[M]\backslash \{i\},1}\vert EA_{[M]\backslash \{i\},2}
                )-(m-1)H(A_{1,1}A_{2,1}\cdots A_{M,1}\vert EA_{[M],2})\nonumber\\
                & =\sum_{i=1}^{M} \left[H(A_{[M]\backslash \{i\},1}\vert EA_{[M]\backslash \{i\},2})-H(A_{[M]\backslash \{i\},1}\vert EA_{\left[  M\right],2})+H(A_{[M]\backslash \{i\},1}\vert EA_{\left[  M\right]  ,2})\right]\nonumber\\
                & \qquad-(m-1)H(A_{1,1}A_{2,1}\cdots A_{M,1}\vert EA_{[M],2})\\
                & =\sum_{i=1}^{M}I(A_{[M]\backslash \{i\},1};A_{i,2}\vert EA_{[M]\backslash \{i\},2})+\widetilde{I}(A_{1,1};A_{2,1};\cdots;A_{M,1}\vert EA_{[M],2}).
            \end{align}
            This concludes the proof.
        \end{proof}
        For the particular case of $M=3$, the expression in \eqref{crulemult:dual-total} reduces to
        \begin{multline}
            \widetilde{I}(A_{1}A_{2};B_{1}B_{2};C_{1}C_{2}\vert E) =
                \widetilde{I}(A_{2};B_{2};C_{2}\vert E) + \widetilde{I}(A_{1};B_{1};C_{1}\vert EA_{2}B_{2}C_{2})\\+ I(B_{1}C_{1};A_{2}\vert EB_{2}C_{2}) +I(A_{1}C_{1};B_{2}\vert EA_{2}C_{2}) +I(B_{1}A_{1};C_{2}\vert EB_{2}A_{2}).
        \end{multline}
        One can use the above equation to establish additivity of dual multipartite intrinsic non-locality for the tripartite case. Then, we can extend the arguments presented for the multipartite intrinsic non-locality to obtain additivity, convexity, and monotonicity under LOCR for dual multipartite intrinsic non-locality and dual multipartite quantum intrinsic non-locality, primarily due to the structure of \eqref{crulemult:dual-total} producing similar terms for every finite~$M$.

\section{Device-Independent Conference Key Agreement Capacity}
    
    \label{sec:di-cka-capacity}
    
    In this section, we define a general form of a tripartite device-independent conference key agreement protocol and its associated capacity. We shall then upper bound this capacity using tripartite intrinsic non-locality. Here, we show details of the definition for the case in which the eavesdropper possesses a no-signaling extension of the underlying correlation, and then we remark how the definition can be modified to the case in which the eavesdropper is restricted by quantum mechanics.
    
    Let $n\in\mathbb{Z}^{+}$, $R \geq 0$, and $\varepsilon\in\left[0,1\right]$. Let $p(a,b,c\vert x,y,z)$ be the correlation of the device shared by Alice, Bob, and Charlie. We define an $(n,R, \varepsilon)$ device-independent conference-key-agreement protocol as follows:

    \begin{itemize}\label{}
        \item Alice, Bob, and Charlie generate the input sequences $x^{n}$, $y^{n}$, and $z^{n}$ to their devices according to the probability distribution $q_{X^{n}Y^{n}Z^{n}}(x^{n},y^{n},z^{n})$. The device is used $n$ times, and the distribution $q_{X^{n}Y^{n}Z^{n}}(x^{n},y^{n},z^{n})$ is independent of the eavesdropper. For round $j \in\{1,\ldots,n\}$, Alice inputs $x_{j}$ and obtains the output $a_{j}$; Bob inputs $y_{j}$ and obtains the output $b_{j}$; Charlie inputs $z_{j}$ and obtains the output $c_{j}$. The distribution for the inputs and outputs can be  embedded in the state $\sigma_{A^{n}B^{n}C^{n}X^{n}Y^{n}Z^{n}}$, defined as
        \begin{multline}
            \sigma_{A^{n}B^{n}C^{n}X^{n}Y^{n}Z^{n}} = \sum_{a^{n},b^{n},c^{n},x^{n},y^{n},z^{n}}q_{X^{n}Y^{n}Z^{n}}(x^{n},y^{n},z^{n})p^{n}(a^{n},b^{n},c^{n}\vert x^{n},y^{n},z^{n})\\ \times\ketbra*{a^{n}b^{n}c^{n}x^{n}y^{n}z^{n}}{a^{n}b^{n}c^{n}x^{n}y^{n}z^{n}}_{A^nB^nC^nX^nY^nZ^n},
        \end{multline}
        where $p^{n}(a^{n},b^{n},c^{n}\vert x^{n},y^{n},z^{n})$ is the $n$-fold independent and identically distributed extension of $p(a,b,c\vert x,y,z)$. The joint state held by Alice, Bob, Charlie, and Eve is an arbitrary no-signaling extension  $\sigma_{A^{n}B^{n}C^{n}X^{n}Y^{n}Z^{n}E}$ of  $\sigma_{A^{n}B^{n}C^{n}X^{n}Y^{n}Z^{n}}$, as defined in \eqref{nosig_ext}.
        
        \item Alice performs a local channel $\mathcal{L}^{A}_{A^{n}\rightarrow M_{A}C_{A}}$, with $C_{A}$ denoting a classical register that is publicly communicated from Alice to Bob and Charlie, and $M_{A}$ denotes a classical local memory register that is not used for public communication. The register $\bar{C}_{A}$ is a classical register held by Eve, which is a copy of $C_{A}$. Similarly, Bob performs a local channel $\mathcal{L}^{B}_{B^{n}\rightarrow M_{B}C_{B}}$, with $C_{B}$ denoting the classical register that is publicly communicated from Bob to Alice and Charlie, and $M_{B}$ denotes a classical local memory register that is not used for public communication. The register $\bar{C}_{B}$ is a classical register held by Eve, which is a copy of $C_{B}$. Charlie performs a local channel $\mathcal{L}^{C}_{C^{n}\rightarrow M_{C}C_{C}}$, with $C_{C}$ denoting the classical register that is publicly communicated from Charlie to Bob and Alice, and $M_{C}$ denotes a classical local memory register, which is not used for public communication. The register $\bar{C}_{C}$ is a classical register held by Eve, which is a copy of $C_{C}$. The registers $C_A$, $C_B$, and $C_C$ (public communication) are used for parameter estimation. If the parameters are found to be outside of a predetermined range, the protocol is aborted and no secret key is agreed upon.
        
        \item Alice then performs the decoding channel $\mathcal{D}^{A}_{M_{A}C_{A}C_{B}C_{C}\rightarrow L_{A}}$ to obtain her final key system~$L_A$. Bob performs the decoding channel $\mathcal{D}^{B}_{M_{B}C_{A}C_{B}C_{C}\rightarrow L_{B}}$ to obtain his final key system~$L_B$. Charlie performs the decoding channel $\mathcal{D}^{C}_{M_{C}C_{A}C_{B}C_{C}\rightarrow L_{C}}$ to obtain his final key system~$L_C$. This protocol yields a state $\omega_{L_{A}L_{B}L_{C}EX^{n}Y^{n}Z^{n}\bar{C}_{A}\bar{C}_{B}\bar{C}_{C}}$ that satisfies
        \begin{equation}\label{cont1}
            \frac{1}{2}\left\Vert \Phi_{L_{A}L_{B}L_{C}EX^{n}Y^{n}Z^{n}\bar{C}_{A}\bar{C}_{B}\bar{C}_{C}}-\omega_{L_{A}L_{B}L_{C}EX^{n}Y^{n}Z^{n}\bar{C}_{A}\bar{C}_{B}\bar{C}_{C}} \right\Vert_{1} \leq \varepsilon,
        \end{equation}
        where
        \begin{equation}\label{state:ideal}
            \Phi_{L_{A}L_{B}L_{C}EX^{n}Y^{n}Z^{n}\bar{C}_{A}\bar{C}_{B}\bar{C}_{C}} = 2^{-nR} \sum_{l=1}^{2^{nR}}\ketbra*{l}{l}_{L_{A}}\otimes\ketbra*{l}{l}_{L_{B}}\otimes\ketbra*{l}{l}_{L_{C}}\otimes\omega_{EX^{n}Y^{n}Z^{n}\bar{C}_{A}\bar{C}_{B}\bar{C}_{C}}.
        \end{equation}
    \end{itemize}

    \begin{figure}[t]
        \centering\includegraphics[width=\textwidth]{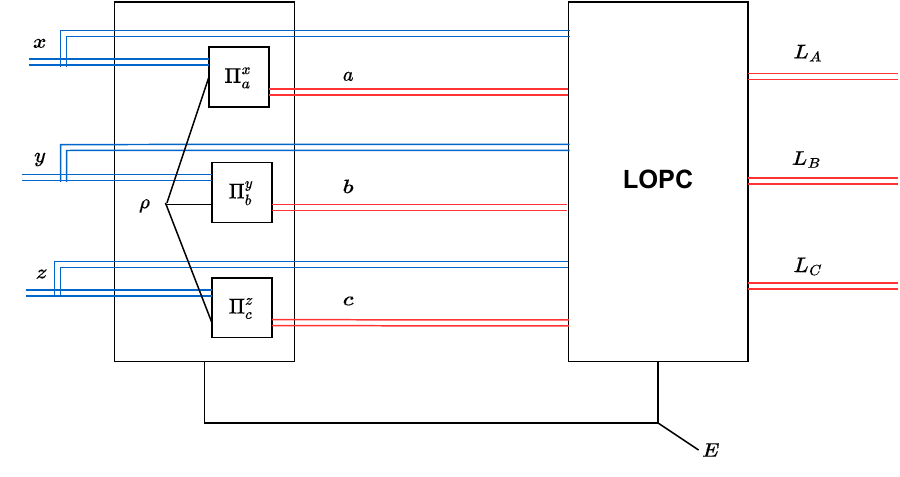}
        \caption{General schematic for device-independent conference key agreement. The POVMs $\{\Pi^{(x)}_{a}\}_{a}$, $\{\Pi^{(y)}_{b}\}_{b}$, and $\{\Pi^{(z)}_{c}\}_{c}$ are  available to Alice, Bob, and Charlie, respectively. The eavesdropper is in possession of the quantum and classical information in system $E$. LOPC stands for local operations and public communication and  is used by Alice, Bob, and Charlie to distill the final conference key.}
        \label{fig:pic}
    \end{figure}

    A general protocol of the above form is depicted in Figure~\ref{fig:pic}.
    A rate $R$ is achievable for a device characterized by a correlation $p$ if there exists an $(n,R-\delta, \varepsilon)$ device-independent conference key agreement protocol for all $\varepsilon \in (0, 1]$, $\delta > 0$, and sufficiently large $n$. The maximum achievable rate is denoted by $\operatorname{DI}(p)$ and is called the DI conference key agreement capacity.
    
    These definitions can easily be modified to the case in which the eavesdropper is restricted by quantum mechanics. The main modification is that the underlying correlation is a quantum correlation and the eavesdropper is allowed to possess a quantum extension of it. We denote the resulting capacity by $\operatorname{DI}_Q(p)$.
    
    It is straightforward to generalize everything stated above to the case of a multipartite correlation $p(a_1, \ldots, a_M \vert  x_1, \ldots, x_M)$.
    	
    In \cite{ribeiro2018fully}, a lower bound on conference key agreement rate was established for a particular protocol. In this work, we are trying to address a different question that can be answered regardless of any particular choice of protocol. We are concerned with the no-signaling or quantum correlations that characterize the devices used for device-independent conference key agreement.
    The question we want to answer is as follows: \textit{given a correlation $p(a,b,c\vert x,y,z)$, produced by a device, what is a non-trivial upper bound on the conference key agreement rate that can be extracted from this device with any possible protocol?}
    
    We answer this question for independent and identically distributed (i.i.d.) devices, which means, in each round of the protocol, the device is characterized by the correlation $p(a,b,c\vert x,y,z)$. The inputs within each round of the protocol can be correlated but not across rounds. This i.i.d.~assumption is not a drawback as we are interested in calculating \textit{upper} bounds on conference key agreement rates: if we show that a correlation can certify no more than a certain limit of key rate against an eavesdropper restricted to i.i.d.~attacks, then the correlation certainly cannot certify more than this limit against an eavesdropper without such a restriction. 

\subsection{Upper Bound on DI Conference Key Agreement Capacity}\label{subsec:tri-ups}

    Now, we prove that tripartite intrinsic non-locality is indeed an upper bound on the DI conference key agreement capacity. 
    \begin{theorem}\label{theorem_1}
        The tripartite intrinsic non-locality $N(A;B;C)_{p}$ is an upper bound on the device-independent conference key agreement capacity of a device characterized by the no-signaling correlation $p(a,b,c\vert x,y,z)$ and sharing no-signaling correlations with an eavesdropper:
        \begin{align}
            \operatorname{DI}(p) \leq N(A;B;C)_{p}.
        \end{align}
    \end{theorem}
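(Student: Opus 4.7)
The plan is to follow the standard converse approach for secret-key-type capacities, with conditional total correlation playing the role that conditional mutual information plays in the bipartite case. I would fix an arbitrary $(n, R, \varepsilon)$ protocol with input distribution $q_{X^n Y^n Z^n}$ and an arbitrary no-signaling extension $\sigma_{A^n B^n C^n X^n Y^n Z^n E}$ of $p^n$, and let $\omega$ denote the resulting final state. A direct computation on the ideal state in \eqref{state:ideal} gives $I(L_A; L_B; L_C \vert E X^n Y^n Z^n \bar{C}_A \bar{C}_B \bar{C}_C)_{\Phi} = 2 n R$, because the three $L$-registers are perfectly correlated classical copies of a uniform $n R$-bit string independent of Eve's side information. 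Combining this with the trace-distance condition \eqref{cont1} and the continuity bound \eqref{cont} (using $\log_2 \dim \mathcal{H}_{L_A L_B} = 2 n R$) yields
\[
2 n R (1 - 2 \varepsilon) \leq I(L_A; L_B; L_C \vert E X^n Y^n Z^n \bar{C}_A \bar{C}_B \bar{C}_C)_{\omega} + 3\, g(\varepsilon).
\]

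Next I would reduce the right-hand side to a functional of $\sigma$ in three moves. First, data processing for conditional total correlation under the three local decoders $\mathcal{D}^A, \mathcal{D}^B, \mathcal{D}^C$ replaces the $L$-registers by the pre-decoding registers; the classical copies $C_A, C_B, C_C$ appearing in each party's argument duplicate the $\bar{C}$-registers already in the conditioning, contribute nothing to the entropies, and so may be dropped, leaving $I(M_A; M_B; M_C \vert E X^n Y^n Z^n \bar{C}_A \bar{C}_B \bar{C}_C)$. Second, three applications of the chain rule \eqref{eqn}, one per $\bar{C}_A, \bar{C}_B, \bar{C}_C$, move those registers from the conditioning into the corresponding first argument, each application producing only nonnegative two-party conditional-mutual-information corrections that may be discarded, yielding the upper bound $I(M_A \bar{C}_A; M_B \bar{C}_B; M_C \bar{C}_C \vert E X^n Y^n Z^n)$. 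Third, using the classical equality $\bar{C}_i = C_i$, data processing on the three local encoders $\mathcal{L}^A, \mathcal{L}^B, \mathcal{L}^C$ (each mapping $A^n$, $B^n$, $C^n$ to $M_A C_A$, $M_B C_B$, $M_C C_C$ respectively) bounds this above by $I(A^n; B^n; C^n \vert E X^n Y^n Z^n)_{\sigma}$.

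Since the resulting inequality $2 n R (1 - 2 \varepsilon) \leq I(A^n; B^n; C^n \vert E X^n Y^n Z^n)_{\sigma} + 3\, g(\varepsilon)$ holds for every no-signaling extension $\sigma$, I would take the infimum on the right. Because $q_{X^n Y^n Z^n}$ is feasible in the supremum defining $N(A^n; B^n; C^n)_{p^n}$, and because additivity (proved separately in the paper) gives $N(A^n; B^n; C^n)_{p^n} = n\, N(A; B; C)_p$, the bound becomes $2 n R (1 - 2 \varepsilon) \leq 2 n\, N(A; B; C)_p + 3\, g(\varepsilon)$. Dividing by $2 n$, applying the definition of an achievable rate (with the standard $R - \delta$ slack), and sending $n \to \infty$, $\varepsilon \to 0$, and $\delta \to 0$ in sequence yields $R \leq N(A; B; C)_p$.

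The main obstacle I expect is the careful bookkeeping of the public-communication registers, which act simultaneously as outputs of the encoders, as inputs to the decoders on all three parties' sides, and as part of Eve's conditioning via $\bar{C}_A, \bar{C}_B, \bar{C}_C$. Bipartite data processing alone is not enough to shuffle them freely between arguments and conditioning; the multipartite chain rule \eqref{eqn} is precisely what makes these steps clean, since the corrections it introduces when a register is moved between conditioning and an argument are nonnegative two-party conditional mutual informations that can safely be discarded.
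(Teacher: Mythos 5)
Your proposal is correct and follows essentially the same route as the paper's proof: start from the ideal state to get $2nR$, apply the multipartite continuity bound, data-process through the decoders, discard redundant classical copies, use the chain rule \eqref{eqn} (discarding the nonnegative CMI corrections) to migrate the $\bar{C}$-registers from the conditioning into the arguments, data-process through the encoders, then minimize over extensions, maximize over input distributions, invoke additivity, and take limits. The only cosmetic difference is that you drop all of $C_A, C_B, C_C$ from the arguments in one step (they are all determined by the $\bar{C}$-registers in the conditioning), whereas the paper keeps each party's own $C$-register and removes it only after the chain-rule step; both are valid.
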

    
    \begin{proof}
        The states $\Phi$, $\omega$, and $\sigma$ are given in the definition of device-independent conference key agreement in Section~\ref{sec:di-cka-capacity}. Using \eqref{cont1} and \eqref{cont}, we find that
        \begin{align}
            2nR &= I(L_{A};L_{B};L_{C}\vert EX^{n}Y^{n}Z^{n}\bar{C}_{A}\bar{C}_{B}\bar{C}_{C})_{\Phi}\\
            &\leq I(L_{A};L_{B};L_{C}\vert EX^{n}Y^{n}Z^{n}\bar{C}_{A}\bar{C}_{B}\bar{C}_{C})_{\omega} + \tilde{\varepsilon}.
        \end{align}
        where $\tilde{\varepsilon} = 4\varepsilon nR + 3g(\varepsilon)$ and $g(\varepsilon)$ is defined in \eqref{g}. Using data processing of conditional total correlation for $L_{A}$, $L_{B}$, and $L_{C}$ under the local channels  $\mathcal{D}^{A}_{M_{A}C_{A}C_{B}C_{C}\rightarrow L_{A}}$, $\mathcal{D}^{B}_{M_{B}C_{A}C_{B}C_{C}\rightarrow L_{A}}$, and $\mathcal{D}^{C}_{M_{C}C_{A}C_{B}C_{C}\rightarrow L_{A}}$, we conclude that
        \begin{align}
            2nR &\leq I(L_{A};L_{B};L_{C}\vert EX^{n}Y^{n}Z^{n}\bar{C}_{A}\bar{C}_{B}\bar{C}_{C})_{\omega} + \tilde{\varepsilon}\\
            &\leq I(M_{A}C_{A}C_{B}C_{C};M_{B}C_{A}C_{B}C_{C};M_{C}C_{A}C_{B}C_{C}\vert EX^{n}Y^{n}Z^{n}\bar{C}_{A}\bar{C}_{B}\bar{C}_{C})_{\omega} + \tilde{\varepsilon}.
        \end{align}
        Now, since $\bar{C}_B$ is a copy of $C_B$ and $\bar{C}_C$ is a copy of $C_C$, we conclude that
        \begin{equation}
            H(M_{A}C_{A}C_{B}C_{C}\vert EX^{n}Y^{n}Z^{n}\bar{C}_{A}\bar{C}_{B}\bar{C}_{C}) = H(M_{A}C_{A}\vert EX^{n}Y^{n}Z^{n}\bar{C}_{A}\bar{C}_{B}\bar{C}_{C}).
        \end{equation}
        A similar manipulation can be applied to $H(M_{B}C_{A}C_{B}C_{C}\vert EX^{n}Y^{n}Z^{n}\bar{C}_{A}\bar{C}_{B}\bar{C}_{C})$ and \\ $H(M_{C}C_{A}C_{B}C_{C}\vert EX^{n}Y^{n}Z^{n}\bar{C}_{A}\bar{C}_{B}\bar{C}_{C})$, giving us 
        \begin{align}
            2nR &\leq I(M_{A}C_{A}C_{B}C_{C};M_{B}C_{A}C_{B}C_{C};M_{C}C_{A}C_{B}C_{C}\vert EX^{n}Y^{n}Z^{n}\bar{C}_{A}\bar{C}_{B}\bar{C}_{C})_{\omega} + \tilde{\varepsilon}\nonumber\\
            &\leq I(M_{A}C_{A};M_{B}C_{B};M_{C}C_{C}\vert EX^{n}Y^{n}Z^{n}\bar{C}_{A}\bar{C}_{B}\bar{C}_{C})_{\omega} + \tilde{\varepsilon}.
        \end{align}
        Using \eqref{eqn} 
        and ignoring the negative terms that arise, we find that
        \begin{align}
            2nR & \leq I(M_{A}C_{A};M_{B}C_{B};M_{C}C_{C}\vert EX^{n}Y^{n}Z^{n}\bar{C}_{A}\bar{C}_{B}\bar{C}_{C})_{\omega} + \tilde{\varepsilon}\nonumber\\
            &\leq I(M_{A}C_{A}\bar{C}_{A};M_{B}C_{B}\bar{C}_{B};M_{C}C_{C}\bar{C}_{C}\vert EX^{n}Y^{n}Z^{n})_{\omega} + \tilde{\varepsilon}\nonumber\\
            &= I(M_{A}C_{A};M_{B}C_{B};M_{C}C_{C}\vert EX^{n}Y^{n}Z^{n})_{\omega} + \tilde{\varepsilon}.
        \end{align}
        Using data processing of conditional total correlation on $M_{A}C_{A}$, $M_{B}C_{B}$, and $M_{C}C_{C}$,
        \begin{align}\label{eqn2}
            2nR&\leq I(M_{A}C_{A};M_{B}C_{B};M_{C}C_{C}\vert EX^{n}Y^{n}Z^{n})_{\omega} + \tilde{\varepsilon}\nonumber\\
            &\leq I(A^{n};B^{n};C^{n}\vert EX^{n}Y^{n}Z^{n})_{\sigma} + \tilde{\varepsilon}.
        \end{align}
        Using the fact that the no-signaling extension applied in the protocol in Section~\ref{sec:di-cka-capacity} is arbitrary,
        \begin{align}
            2nR& \leq \inf_{\text{ext.}}I(A^{n};B^{n};C^{n}\vert EX^{n}Y^{n}Z^{n})_{\sigma} + \tilde{\varepsilon}
        \end{align}
        Using $\tilde{\varepsilon} = 4\varepsilon nR + 3g(\varepsilon)$,
        \begin{align}
            2(1-2\varepsilon)nR& \leq \inf_{\text{ext.}} I(A^{n};B^{n};C^{n}\vert EX^{n}Y^{n}Z^{n})_{\sigma} + 3g(\varepsilon).
        \end{align}
        Taking the supremum over all input distributions,
        \begin{align}
            2(1-2\varepsilon)nR& \leq \sup_{q}\inf_{\text{ext.}} I(A^{n};B^{n};C^{n}\vert EX^{n}Y^{n}Z^{n})_{\sigma} + 3g(\varepsilon).
        \end{align}
        Using additivity (see Theorem~\ref{theorem:additivity}),
        \begin{align}
            2(1-2\varepsilon)nR& \leq \sup_{q}\inf_{\text{ext.}} I(A^{n};B^{n};C^{n}\vert EX^{n}Y^{n}Z^{n})_{\rho} + 3g(\varepsilon)\\
            &=  n\cdot \sup_{q}\inf_{\text{ext.}} I(A;B;C\vert EXYZ)_{\rho} + 3g(\varepsilon)\\
            \implies 2(1-2\varepsilon)R& \leq \sup_{q}\inf_{\text{ext.}} I(A;B;C\vert EXYZ)_{\rho} + \frac{3}{n}g(\varepsilon).
        \end{align}
        Taking the limit $n\rightarrow\infty$ and then $\varepsilon\rightarrow 0$, we conclude that
        \begin{align}
            \operatorname{DI}(p) \leq N(A;B;C).
        \end{align}
        This concludes the proof.
    \end{proof}
    
    Using similar techniques and taking appropriate quantum extensions establishes the following:
    \begin{theorem}\label{theorem_1q}
        The quantum tripartite intrinsic non-locality $N_Q(A;B;C)_{p}$ is an upper bound on the device-independent conference key agreement capacity of a device characterized by the quantum correlation $p(a,b,c\vert x,y,z)$ and sharing quantum correlations with an eavesdropper:
        \begin{align}
            \operatorname{DI}_{Q}(p) \leq N_{Q}(A;B;C)_{p}.
        \end{align}
    \end{theorem}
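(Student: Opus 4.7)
The plan is to follow the standard strategy for turning a resource measure into an upper bound on a key-agreement capacity: transfer the conditional total correlation on the ideal key state to the real protocol output via the continuity estimate \eqref{cont}, push it back to the raw measurement outputs via data processing, and finally collapse the $n$-round quantity to $n$ copies of the single-round TINL using the i.i.d.\ structure of the device together with additivity of conditional total correlation on product states. The only distinction between Theorems \ref{theorem_1} and \ref{theorem_1q} is the class of extensions Eve is allowed (no-signaling vs.\ quantum), and the same calculation handles both.

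First, fix an $(n, R, \varepsilon)$ DI conference-key-agreement protocol with output state $\omega$ and ideal state $\Phi$ as in \eqref{state:ideal}. A direct calculation on $\Phi$ gives $I(L_A; L_B; L_C \mid E X^n Y^n Z^n \bar C_A \bar C_B \bar C_C)_\Phi = 2 n R$, since $L_A = L_B = L_C$ is a single uniform random variable over $\{1,\ldots,2^{nR}\}$ independent of Eve's systems. Combining the security bound \eqref{cont1} with the continuity inequality \eqref{cont} (using $M = 3$ and combined argument dimension $2^{2nR}$) produces $I(L_A; L_B; L_C \mid \cdots)_\omega \geq 2 n R - 4 n R \varepsilon - 3 g(\varepsilon)$ for every admissible extension $\sigma$ chosen by Eve.

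Second, I would chain three data-processing steps to replace the final key registers by the raw outputs. Let $\sigma_{A^n B^n C^n X^n Y^n Z^n E}$ be the post-measurement state. Data processing under the local encoders $\mathcal L^A, \mathcal L^B, \mathcal L^C$ yields $I(A^n; B^n; C^n \mid E X^n Y^n Z^n)_\sigma \geq I(M_A C_A; M_B C_B; M_C C_C \mid E X^n Y^n Z^n)$. Next, applying the chain rule \eqref{eqn} three times to peel $C_A, C_B, C_C$ from the three arguments into the conditioning (discarding the non-negative $I(C_i; M_j C_j \mid \cdots)$ correction terms) and identifying each $\bar C_i$ with $C_i$, I obtain $I(M_A C_A; M_B C_B; M_C C_C \mid E X^n Y^n Z^n) \geq I(M_A; M_B; M_C \mid E X^n Y^n Z^n \bar C_A \bar C_B \bar C_C)$. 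Finally, data processing under the local decoders $\mathcal D^A, \mathcal D^B, \mathcal D^C$ (whose inputs beyond the $M_i$ are deterministic functions of the $\bar C$'s already in the conditioning) gives $I(M_A; M_B; M_C \mid \cdots) \geq I(L_A; L_B; L_C \mid \cdots)_\omega$. Composing the three bounds shows $I(L_A; L_B; L_C \mid \cdots)_\omega \leq I(A^n; B^n; C^n \mid E X^n Y^n Z^n)_\sigma$ for every admissible $\sigma$.

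Third, I would collapse the $n$-round infimum into $n$ copies of the single-round one. Since $p^n = p^{\otimes n}$, a tensor-product extension is always admissible (a tensor product of quantum extensions is still a quantum extension, and the analogous statement for no-signaling extensions is immediate). For each round $j$ and the round-$j$ marginal $q_j$ of $q^n$, pick a single-round extension $\rho_{E_j}^{abcxyz}$ coming within $\delta/n$ of the infimum defining $N(A;B;C)_p$ at $q_j$, and form $\rho_E^{a^n b^n c^n x^n y^n z^n} = \bigotimes_j \rho_{E_j}^{a_j b_j c_j x_j y_j z_j}$. Conditioning on any fixed $x^n y^n z^n$, the resulting state is a tensor product across rounds, so additivity of conditional total correlation on product states combined with averaging over $q^n$ yields $I(A^n; B^n; C^n \mid E^n X^n Y^n Z^n)_{\sigma^\star} = \sum_{j=1}^n I(A_j; B_j; C_j \mid E_j X_j Y_j Z_j)_{\sigma_j^\star} \leq 2 n\, N(A;B;C)_p + 2 \delta$. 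Combining with the previous step produces $2 n R - 4 n R \varepsilon - 3 g(\varepsilon) \leq 2 n N(A;B;C)_p + 2 \delta$; dividing by $2n$ and sending $\varepsilon, \delta \to 0$ and $n \to \infty$ yields $R \leq N(A;B;C)_p$.

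The main obstacle will be the data-processing chain in the second paragraph: the broadcast registers $C_A, C_B, C_C$ are simultaneously held by every party and by Eve (through $\bar C_A, \bar C_B, \bar C_C$), which prevents a direct black-box application of data processing at the decoder step. The chain rule \eqref{eqn} is precisely the technical ingredient that lets me shuffle these classical registers between arguments and conditioning while only decreasing the conditional total correlation, and this is what allows the three-step reduction to go through.
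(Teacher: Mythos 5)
Your proposal is correct and follows essentially the same route as the paper's proof of Theorem~\ref{theorem_1} (which the paper says carries over to Theorem~\ref{theorem_1q} with quantum extensions): compute $2nR$ on the ideal key state, transfer to the real output via the continuity bound \eqref{cont}, use data processing together with the chain rule \eqref{eqn} to shuffle the broadcast registers $C_A,C_B,C_C,\bar C_A,\bar C_B,\bar C_C$ and trace back to $I(A^n;B^n;C^n\vert EX^nY^nZ^n)_\sigma$, and then collapse the $n$-round quantity to $n$ copies of the single-round bound. The two cosmetic differences—you run the data-processing chain bottom-up and keep the $\bar C$'s in Eve's conditioning throughout (whereas the paper runs it top-down, gives each party local copies of all three public registers before applying the decoder channels, and only then collapses $\bar C_i=C_i$), and you argue the i.i.d.\ collapse directly via an explicit near-optimal product extension rather than invoking the full additivity statement of Theorem~\ref{theorem:additivity} (whose subadditive direction is exactly the content you need)—do not change the substance of the argument.
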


    All the steps (i.e., data processing and additivity) in the proof of Theorem~\ref{theorem_1} can be easily extended to apply to multipartite intrinsic non-locality, dual multipartite intrinsic non-locality, and their respective quantum counterparts. This leads to the following theorems:
    \begin{theorem}\label{theorem_4}
        The multipartite intrinsic non-locality $N(A_{1};\cdots;A_{M})_{p}$ is an upper bound on the device-independent conference key agreement capacity of a device characterized by a no-signaling correlation $p(a_{1},\ldots,a_{M}\vert x_{1},\ldots,x_{M})$ and sharing no-signaling correlations with an eavesdropper:
        \begin{align}
            \operatorname{DI}(p) \leq N(A_{1};\cdots;A_{M})_{p}.
        \end{align}
    \end{theorem}

    \begin{theorem}\label{theorem_4-q} The
        multipartite quantum intrinsic non-locality $N_{Q}(A_{1};\cdots;A_{M})_{p}$ is an upper bound on the device-independent conference key agreement capacity of a device characterized  by a quantum correlation $p(a_{1},\ldots,a_{M}\vert x_{1},\ldots,x_{M})$ and sharing quantum correlations with an eavesdropper:
        \begin{align}
            \operatorname{DI}_{Q}(p) \leq N_{Q}(A_{1};\cdots;A_{M})_{p}.
        \end{align}
    \end{theorem}

    \begin{theorem}\label{theorem_5}
        Dual multipartite intrinsic non-locality $\widetilde{N}(A_{1};\cdots;A_{M})_{p}$ is an upper bound on the device-independent conference key agreement capacity of a device characterized by a no-signaling correlation $p(a_{1},\ldots,a_{M}\vert x_{1},\ldots,x_{M})$ and sharing no-signaling correlations with an eavesdropper:
        \begin{align}
            \operatorname{DI}(p) \leq \widetilde{N}(A_{1};\cdots;A_{M})_{p}.
        \end{align}
    \end{theorem}

    \begin{theorem}\label{theorem_5-q} Dual
        multipartite quantum intrinsic non-locality $\widetilde{N}_{Q}(A_{1};\cdots;A_{M})_{p}$ is an upper bound on the device-independent conference key agreement capacity of a device characterized  by a quantum correlation $p(a_{1},\ldots,a_{M}\vert x_{1},\ldots,x_{M})$ and sharing quantum correlations with an eavesdropper:
        \begin{align}
            \operatorname{DI}_{Q}(p) \leq \widetilde{N}_{Q}(A_{1};\cdots;A_{M})_{p}.
        \end{align}
    \end{theorem}

\section{Evaluating Quantum Tripartite Intrinsic Non-Locality}

\label{sec:ex-ex}

    In this section, we evaluate quantum tripartite intrinsic non-locality for various examples. While evaluating the quantum tripartite intrinsic non-locality, we should consider the actions of an eavesdropper, who is in possession of an extension of the underlying quantum state shared by Alice, Bob, and Charlie. We note here that all source files needed to generate the plots in this section are available with the arXiv posting of this paper.
    
    An eavesdropper, Eve, of a DIQKD protocol is allowed access to the quantum extension system of the state shared between Alice and Bob prior to public communication of measurement settings. 
    Eve is also assumed to be in possession of copies of all classical communication exchanged by Alice and Bob, as well as all local hidden variables that can be attributed to the correlations that Alice and Bob share. We also assume that the state and black boxes received by Alice and Bob are in fact supplied by Eve herself. 
    
    For DI conference key agreement protocols, we assume that Eve has access to all the same quantum and classical information as in DIQKD but sourced from all the participants of the DI conference key agreement protocol. Eve can then use this collected information to reduce the key agreement rate. Any procedure employed by Eve to reduce the key agreement rate is known as an attack.
    
     The first attack that we consider is a modification of the attack for DIQKD used in~\cite{kaur2020fundamental}, which was helpful for calculating an upper bound on quantum intrinsic non-locality. We use the RMW18 Protocol \cite[]{ribeiro2018fully} 
     for all further calculations. First, suppose that the underlying state is as follows:
     \begin{align}\label{state1}
        \rho_{\tilde{A}\tilde{B}\tilde{C}}=
        (1-p)\ketbra*{\text{GHZ}}{\text{GHZ}}_{\tilde{A}\tilde{B}\tilde{C}} + p\frac{\mathbb{I}_{\tilde{A}\tilde{B}\tilde{C}}}{8}  ,
    \end{align}
    where $\ket*{\text{GHZ}}=\left(\ket*{000}+\ket*{111}\right)/\sqrt{2}$.  Alice's measurement choice $x=0$ corresponds to~$\sigma_{Z}$, and $x=1$ corresponds to $\sigma_{X}$. Bob's measurement choice $y=0$ corresponds to~$(\sigma_{Z}-\sigma_{X})/\sqrt{2}$, the choice $y=1$ corresponds to $(\sigma_{Z}+\sigma_{X})/\sqrt{2}$, and the choice $y=2$ corresponds to $\sigma_{Z}$. Charlie's measurement choices are $\sigma_{Z}$ when $z=0$ and $\sigma_{X}$ when $z=1$. This leads to a quantum correlation $q(a,b,c|x,y,z)$.
    
    \begin{figure}[t]
        \centering\includegraphics[width=0.8\textwidth]{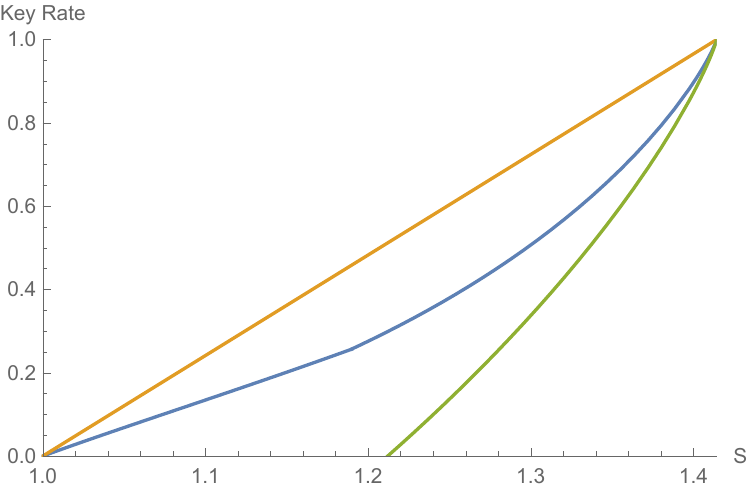}
        \caption{Key rate versus parity-CHSH violation $S$. The orange line is an upper bound on quantum tripartite intrinsic non-locality computed for the attack described in \eqref{attack1}, the blue line is an upper bound on quantum tripartite intrinsic non-locality for the correlation parameterized by $S$ using a multipartite generalization of the attack in \cite{pironio2009device,arnon2021upper}, and the green solid line is the lower bound for the state in \eqref{state1} calculated from \cite{ribeiro2018fully}. }
        \label{fig:plot1}
    \end{figure}
    
    Using the Bell inequality corresponding to the parity-CHSH game \cite{ribeiro2018fully,holz2020genuine}, the parity-CHSH violation $S$ is as follows:\footnote{The calculations for all $S$, $p_{\text{win}}$, and plots are in the Mathematica files included with the arXiv posting of our paper.}
    \begin{align}
        S=\sqrt{2}(1-p).
    \end{align}
    We see that $\rho_{\tilde{A}\tilde{B}\tilde{C}}$  produces a local correlation when the parity-CHSH violation is less than or equal to one or, equivalently, when $p\geq1-1/\sqrt{2}$. Let $q_{S^{p}}(a,b,c\vert x,y,z)$ denote a quantum correlation with parity-CHSH violation $S^{p}$. For $\varepsilon\leq p \leq 1-\frac{1}{\sqrt{2}}$, we can think of the correlation $q_{S^{p}}(a,b,c\vert x,y,z)$  as a convex combination of $q_{S^{\varepsilon}}(a,b,c\vert x,y,z)$, which is non-local, and $q_{S^{1-\frac{1}{\sqrt{2}}}}(a,b,c\vert x,y,z)$, which is local, in the following fashion:
    \begin{align}
        q_{S^{p}}(a,b,c\vert x,y,z) = (1-\alpha(\varepsilon))q_{S^{\varepsilon}}(a,b,c\vert x,y,z)+\alpha(\varepsilon)q_{S^{1-\frac{1}{\sqrt{2}}}}(a,b,c\vert x,y,z),
    \end{align}
    where 
    \begin{align}
        \alpha(\varepsilon)=\frac{p-\varepsilon}{1-\frac{1}{\sqrt{2}}-\varepsilon}.
    \end{align}
    For local correlations, quantum tripartite intrinsic non-locality is equal to zero. Hence, using Theorem~\ref{theorem:tri-conv-q}, we conclude that
    \begin{align}
        N_{Q}(A;B;C)_{q_{S^{p}}}\leq (1-\alpha(\varepsilon))N_{Q}(A;B;C)_{q_{S^{\varepsilon}}}.
    \end{align}
    By considering the trivial extension for $q_{S^{p}}(a,b,c\vert x,y,z)$, we obtain
    \begin{align}\label{attack1}
        N_{Q}(A;B;C)_{q_{S^{p}}}\leq \min_{0\leq\varepsilon\leq p}\sup_{q(x,y,z)}(1-\alpha(\varepsilon))I(A;B;C)_{q_{S^{\varepsilon}}}.
    \end{align}
    The lower bound is calculated using the probability of winning the parity-CHSH game, given by
    \begin{align}
        p_{\text{win}} = \frac{1}{2}+\frac{(1-p)}{2\sqrt{2}}.
    \end{align}
    We then plot this quantity against the parity-CHSH violation $S$ in Figure~\ref{fig:plot1}.
        
    The second attack on the RMW18 Protocol \cite[]{ribeiro2018fully}
    that we consider is a multipartite generalization of the attack on DIQKD first proposed in \cite{pironio2009device}, in the context of a lower bound. It has also been used in \cite{arnon2021upper} for evaluating an upper bound on DIQKD. It can be thought of as a particular way of achieving a desired parity-CHSH violation $S$ and quantum bit error rate (QBER) $Q$. In the multipartite generalization, we consider the following state:
    \begin{equation}
    \frac{1-C}{2} (Z_{\tilde{A}} \otimes Z_{\tilde{B}} \otimes Z_{\tilde{C}})( |\text{GHZ}\rangle\!\langle\text{GHZ}|_{\tilde{A}\tilde{B}\tilde{C}}) (Z_{\tilde{A}} \otimes Z_{\tilde{B}} \otimes Z_{\tilde{C}})  + 
        \frac{1+C}{2} |\text{GHZ}\rangle\!\langle\text{GHZ}|_{\tilde{A}\tilde{B}\tilde{C}},
    \end{equation}
    which results from the action of collective dephasing on the GHZ state, and which is purified by the following state vector:
    \begin{align}
        \sqrt{\frac{1-C}{2}}\left(\frac{\ket*{000}-\ket*{111}}{\sqrt{2}}\right)_{\tilde{A}\tilde{B}\tilde{C}}\otimes\ket*{0}_{E} + \sqrt{\frac{1+C}{2}}\left(\frac{\ket*{000}+\ket*{111}}{\sqrt{2}}\right)_{\tilde{A}\tilde{B}\tilde{C}}\otimes\ket*{1}_{E}.
    \end{align}
    
    Alice's measurement choice $x=0$ corresponds to $\sigma_{Z}$, and $x=1$ corresponds to $\sigma_{X}$. Bob's measurement choice $y=0$ corresponds to $(\sigma_{Z}+C\sigma_{X})/\sqrt{1+C^{2}}$, the choice $y=1$ corresponds to $(\sigma_{Z}-C\sigma_{X})/\sqrt{1+C^{2}}$, and $y=2$ corresponds to $\sigma_{Z}$. Charlie's measurement choices are $\sigma_{Z}$ when $z=0$ and $\sigma_{Z}$ when $z=1$. The parity-CHSH violation $S$ is given by $S=\sqrt{1+C^{2}}$. To generate key, Alice and Charlie measure $\sigma_{Z}$ and Bob, with probability $1-2Q$, measures $\sigma_{Z}$ and, with probability $2Q$, assigns a random bit. This gives us a QBER of $Q$. The post-measurement state is as follows:
    \begin{align}\label{state3}
        \frac{1-Q}{2}\left(\ketbra*{000}{000}\otimes\rho_{E}^{+}+\ketbra*{111}{111}\otimes\rho_{E}^{-}\right)+\frac{Q}{2}\left(\ketbra*{001}{001}\otimes\rho_{E}^{+}+\ketbra*{110}{110}\otimes\rho_{E}^{-}\right),
    \end{align}
    where
    \begin{align}
        \rho_{E}^{\pm}=\frac{1}{2}
        \begin{pmatrix}
            1+C & \pm\sqrt{1-C^{2}}\\
            \pm\sqrt{1-C^{2}} & 1-C 
        \end{pmatrix}.
    \end{align}
     Note that for the state in \eqref{state1}, the parity-CHSH violation $S$ and QBER $Q$ are related as follows: $Q=\frac{1}{2}(1-\frac{S}{\sqrt{2}})$. After we apply this relation between $S$ and $Q$, we get a correlation that is parameterized by $S$. We then calculate an upper bound on quantum tripartite intrinsic non-locality as a function of $S$ and plot it versus $S$ in Figure~\ref{fig:plot1}. It is important to note that this parameterized correlation is not convex in the parameter $S$, as required by \eqref{eqn:convex}; so if such a curve is not convex to begin with, Theorem~\ref{theorem:tri-conv-q} cannot be invoked to produce a lower, convex curve that is also an upper bound on the quantum tripartite non-locality for the parameterized correlations. We will encounter such a non-convex upper bound curve in Figure~\ref{fig:plot3} of the next section.
    
    \begin{figure}[t]
        \centering\includegraphics[width=0.8\textwidth]{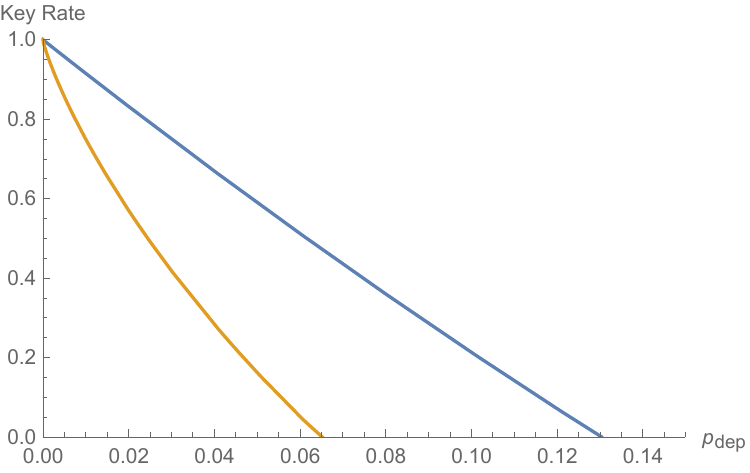}
        \caption{The blue line is the plot of tripartite intrinsic non-locality as function of $p_{\text{dep}}$ for the state $\mathcal{D}^{\otimes 3}(\ketbra*{\text{GHZ}}{\text{GHZ}})$ using the attack leading to \eqref{attack1}. The gold line indicates the lower bound calculated from \cite{ribeiro2018fully}. }
        \label{fig:plot2}
    \end{figure}
    
    A common qubit noise model is the depolarizing channel, described as
    \begin{align}
        \mathcal{D}(\rho) \coloneqq  (1-p_{\text{dep}})\rho+p_{\text{dep}}\frac{\mathbb{I}}{2}.
    \end{align}
    We can then consider a more realistic noise model given by  $\rho_{\tilde{A}\tilde{B}\tilde{C}}= \mathcal{D}^{\otimes 3}(\ketbra*{\text{GHZ}}{\text{GHZ}})$. For this state, we can consider the attack leading to \eqref{attack1} using the parity-CHSH violation~$S$, given by
     \begin{align}
        S = \frac{(1-p_{\text{dep}})^{3}}{\sqrt{2}}+\frac{(1-p_{\text{dep}})^{2}}{\sqrt{2}}.
    \end{align}
    The lower bound from \cite{ribeiro2018fully} is calculated using the probability of winning the parity-CHSH game, given by
    \begin{align}
        p_{\text{win}} = \frac{1}{2}+\frac{(1-p_{\text{dep}})^{3}}{2\sqrt{2}}+\frac{p(1-p_{\text{dep}})^{2}}{4\sqrt{2}}.
    \end{align}
    We plot quantum tripartite intrinsic non-locality against $p_{\text{dep}}$ in Figure~\ref{fig:plot2}. 
    
    Here we note that tripartite intrinsic non-locality based on dual total correlation provides the exact same upper bounds when calculated using the attack in \eqref{attack1}. For the other examples we have studied, tripartite intrinsic non-locality based on conditional dual total correlation gives worse upper bounds than multipartite intrinsic non-locality based on conditional total correlation.

\section{Upper Bound Evaluation for Experimental DIQKD}\label{sec:UpBo-DIQKD}

    The primary focus of this paper has been DI conference key agreement. An experimental implementation of DI conference key agreement beyond two parties is still in its infancy as of the writing of this paper, though recent progress in multi-party quantum nonlocality experiments is underway \cite{huang2022experimental}. The main interest in conference key agreement is as a way to quickly establish secret key among several parties, possibly linked by a quantum network. Another, more currently accessible, method to achieve this is the development and use of highly efficient DIQKD protocols between pairs of individual users, who can then use these protocols to distribute a single key to all parties. While this method is not as efficient as a genuine three party approach, \cite{epping2016large}, exploring advances in the bipartite scenario of DI conference key agreement or device-independent quantum key distribution (DIQKD) will be relevant to DI conference key agreement.
    
    Recently, there have been experimental works implementing DIQKD \cite{zhang2021experimental,nadlinger2021deviceindependent,liu2021highspeed}. The protocol used by \cite{zhang2021experimental} is of particular interest to us as it uses two key generation rounds, unlike most others which have just one key generation setting. We calculate an upper bound for this bipartite protocol.
    An upper bound on the DIQKD rate is given by quantum intrinsic non-locality, as shown in \cite{kaur2020fundamental}. For the experimental protocol in \cite{zhang2021experimental}, we consider the attack proposed by \cite{arnon2021upper} and calculate quantum intrinsic non-locality \cite{kaur2020fundamental}.
    
    \begin{figure}[t]
        \centering\includegraphics[width=0.8\textwidth]{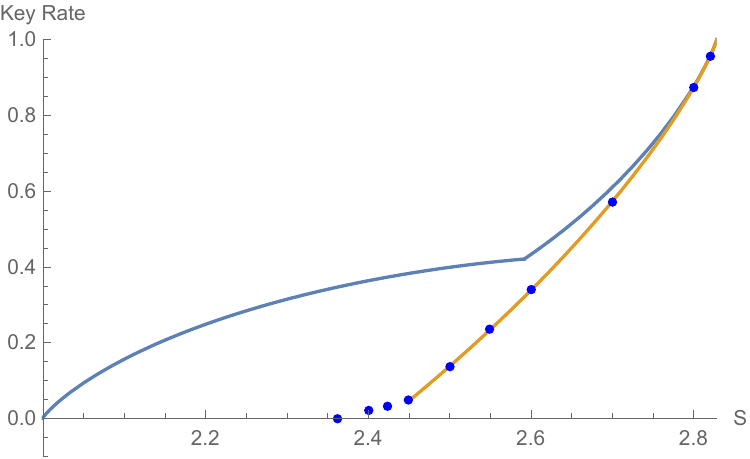}
        \caption{The blue line is an upper bound on quantum intrinsic non-locality versus CHSH violation, calculated for the correlation parameterized by $S$, which is described in the attack \cite{arnon2021upper} for the protocol proposed in \cite{schwonnek2021device}. The blue dots indicate the lower bound of the protocol proposed in \cite{schwonnek2021device}. The yellow line is the lower bound from \cite{pironio2009device}, which coincides with the lower bound of the protocol proposed in~\cite{schwonnek2021device} for certain values of $S$.}
        \label{fig:plot3}
    \end{figure}
    
    In the protocol used in \cite{zhang2021experimental}, Bob has two key generation settings that are picked randomly with equal probability. This experimental protocol is based on the protocol proposed by \cite{schwonnek2021device}. We set Alice's measurement choice $x=0$ to correspond to $\sigma_{Z}$, and $x=1$ corresponds to $\sigma_{X}$. We set Bob's measurement choice $y=0$ to correspond to $(\sigma_{Z}+C\sigma_{X})/\sqrt{1+C^{2}}$, the choice $y=1$ to correspond to $(\sigma_{Z}-C\sigma_{X})/\sqrt{1+C^{2}}$, and $y=2$ corresponds to $\sigma_{Z}$. Bob uses $y\in \{0,1\}$ for key generation, each with its own QBER. We calculate quantum intrinsic non-locality as function of $S$ using the attack described in \cite{arnon2021upper} while setting both QBERs to $Q=\frac{1}{2}(1-\frac{S}{2\sqrt{2}})$, where $S$ is the CHSH violation. This relation between QBER and CHSH violation holds for the Werner state. We then get a correlation that is parameterized by $S$. 
    Figure~\ref{fig:plot3} plots quantum intrinsic non-locality for this protocol versus CHSH violation.

\section{Conclusion}\label{sec:con-clues}
    
    In this paper, we defined multipartite intrinsic non-localities using conditonal total correlation and conditional dual total correlation, and we proved that these quantities are indeed additive and convex upper bounds on the DI conference key agreement capacity. These multipartite intrinsic non-localities are also monotone under local operations and common randomness. A key technical contribution is our derivation of the chain rule for conditional total correlation and conditional dual total correlation, which are applicable to all correlations and may be of independent interest beyond their applications to conference key agreement.

    For future work, we are interested in pursuing more novel DI conference key agreement protocols beyond the one presented in \cite{ribeiro2018fully}. Specifically, one can look for protocols that have more than one measurement setting in the key generation phase because such protocols require lower detector efficiency for DI quantum key distribution, as shown in \cite{PhysRevA.103.052436}. We could also investigate other Bell inequalities presented in \cite{holz2020genuine} in order to find better protocols. 
    
    One may also be interested in determining if either multipartite intrinsic non-locality is indeed a monotone of genuine multipartite Bell non-locality. It is also easy to see that multipartite intrinsic non-locality is equal to zero for correlations that can be described by a local hidden variable common to all parties involved. However, multipartite intrinsic non-locality is not known to be equal to zero for correlations that fail to be genuinely multipartite nonlocal as defined in \cite{bancal2013definitions}, such as (for instance) tripartite correlations that can be decomposed into a convex mixture of correlations that are each only bipartite nonlocal.

    We can also see from Figure~\ref{fig:plot1} that there is a significant gap between the upper  and lower bounds on tripartite DI conference key agreement, so that there is room for improvement. We also want to find new attacks specific to DI conference key agreement to improve the upper bound further and bring it closer to the lower bound. One can also look at convex combinations of various attacks on DI conference key agreement, as shown for DI~quantum key distribution in \cite{kaur2021upper}. Deriving a different multipartite intrinsic non-locality using another information quantity may also be of interest to improve the upper bound.

    Finally, we can also look at securing device-independent conference key agreement using just computational assumptions. There have already been attempts at securing DI quantum key distribution and self testing under computational assumptions based on the learning with errors problem \cite{metger2020deviceindependent,Metger2021selftestingofsingle}. It may be interesting to extend this analysis to the multipartite scenario of DI conference key agreement.

    \textit{Note Added}---We uploaded the first version of our preprint \cite{philip2021intrinsic} to the quant-ph arXiv concurrently with the first version of \cite{HWD21}, after being made aware of their independent work. Ref.~\cite{HWD21} has now been published as \cite{PhysRevA.105.022604}.
    
    \textit{Acknowledgements}---We acknowledge funding from Air Force Office of Scientific Research Award No.~FA9550-20-1-0067. We thank Ignatius W.~Primaatmaja and Charles C.-W.~Lim for their help in producing Figure~\ref{fig:plot3}. We also acknowledge helpful discussions with Soumyadip Patra.

\bibliography{ref}
\bibliographystyle{unsrt}

\appendix

    \section{Convexity}\label{subsec:tri-conv}

    Convexity of tripartite intrinsic non-locality is another important property because convex combinations of no-signaling correlations are also valid no-signaling correlations. This is also the case for quantum correlations \cite{doi:10.1063/1.527066}.
    
    \begin{theorem}[Convexity of TINL]\label{theorem:tri-conv}
        Let $t(a, b, c\vert x, y, z)$ and $r(a, b, c\vert x, y, z)$ be two no-signaling correlations, and let $\lambda \in [0, 1]$. Let $p(a, b, c\vert x, y, c)$ be a mixture of the two correlations, defined as 
        \begin{align}\label{eqn:convex}
            p(a, b, c\vert x, y, c) = \lambda t(a, b, c\vert x, y, z) + (1 - \lambda) r(a, b, c\vert x, y, z).
        \end{align}
        Then,
        \begin{align}
            N(A;B;C)_{p} \leq \lambda N(A;B;C)_{t} + (1 - \lambda) N(A;B;C)_{r}.
        \end{align}
    \end{theorem}
    
    \begin{proof}
        Consider the quantum embeddings of arbitrary no-signaling extensions of $t$, $r$, and~$p$:
        \begin{align}\label{ext_conv-1}
            \tau_{ABCEXYZ} &= \sum_{a,b,c,x,y,z} p(x,y,z)t(a,b,c\vert x,y,z)[abcxyz]_{ABCXYZ}\otimes\tau_{E}^{abcxyz},\\
            \gamma_{ABCEXYZ} &= \sum_{a,b,c,x,y,z} p(x,y,z)r(a,b,c\vert x,y,z)[abcxyz]_{ABCXYZ}\otimes\gamma_{E}^{abcxyz},
            \label{ext_conv-2}
        \end{align}
        and 
        \begin{align}\label{ext_conv-3}
            \zeta_{ABCEXYZ} &= \sum_{a,b,c,x,y,z} p(x,y,z)p(a,b,c\vert x,y,z)[abcxyz]_{ABCXYZ}\otimes\rho_{E}^{abcxyz}\nonumber\\
            \quad &= \sum_{a,b,c,x,y,z}  p(x,y,z)\{( \lambda)t(a,b,c\vert x,y,z) + (1 - \lambda)r(a,b,c\vert x,y,z) \}\nonumber\\
            &\qquad \qquad\quad \times [abcxyz]_{ABCXYZ}\otimes\rho_{E}^{abcxyz}.
        \end{align}
        A particular no-signaling extension of \eqref{ext_conv-3} is as follows:
        \begin{multline}
            \rho_{ABCEXYZ\Lambda} 
            = \sum_{a,b,c,x,y,z,\lambda}  p(x,y,z)\{( \lambda)t(a,b,c\vert x,y,z) [a,b,c,x,y,z]_{ABCXYZ}\otimes\tau_{E}^{abcxyz}\otimes[0]_{\Lambda} \\
             + (1 - \lambda)r(a,b,c\vert x,y,z) [abcxyz]_{ABCXYZ}\otimes\gamma_{E}^{abcxyz}\otimes[1]_{\Lambda}\}.
        \end{multline}
        Consider then
        \begin{align}
            &\inf_{\text{ext. in }\eqref{ext_conv-3}} I(A;B;C \vert EXYZ)_{\zeta} \notag\\
            & \leq I(A;B;C \vert EXYZ\Lambda)_{\rho} \\
            &= (\lambda)I(A;B;C \vert EXYZ)_{\tau} + (1 - \lambda)I(A;B;C \vert EXYZ)_{\gamma}\\
            &\leq (\lambda)\inf_{\text{ext. in }\eqref{ext_conv-1}}I(A;B;C \vert EXYZ)_{\tau} + (1 - \lambda)\inf_{\text{ext. in }\eqref{ext_conv-2}}I(A;B;C \vert EXYZ)_{\gamma}.
        \end{align}
        The first inequality holds because we picked a particular no-signaling extension. The second inequality holds due to the convexity of the individual terms in the definition of conditional total correlation. Since $\tau$ and $\gamma$ are arbitrary no-signaling extensions of $t$ and~$r$, and optimizing over arbitrary input probability distributions, we find that
        \begin{multline}
            \sup_{q}\inf_{\text{ext. in }\eqref{ext_conv-3}} I(A;B;C \vert EXYZ)_{p} \\
            \leq (\lambda)\sup_{q}\inf_{\text{ext. in }\eqref{ext_conv-1}}I(A;B;C \vert EXYZ)_{t} + (1 - \lambda)\sup_{q}\inf_{\text{ext. in }\eqref{ext_conv-2}}I(A;B;C \vert EXYZ)_{r}.
        \end{multline} 
        This concludes the proof.
    \end{proof}
    \begin{theorem}[Convexity of QTINL]\label{theorem:tri-conv-q}
        Let $t(a, b, c\vert x, y, z)$ and $r(a, b, c\vert x, y, z)$ be two quantum correlations, and let $\lambda \in [0, 1]$. Let $p(a, b, c\vert x, y, c)$ be a mixture of the two correlations, defined as 
        \begin{align}
            p(a, b, c\vert x, y, c) = \lambda t(a, b, c\vert x, y, z) + (1 - \lambda) r(a, b, c\vert x, y, z).
        \end{align}
        Then,
        \begin{align}
            N_{Q}(A;B;C)_{p} \leq \lambda N_{Q}(A;B;C)_{t} + (1 - \lambda) N_{Q}(A;B;C)_{r}.
        \end{align}
    \end{theorem}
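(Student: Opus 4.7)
The plan is to follow the same overall strategy as the proof of Theorem on convexity of TINL, but with the crucial additional verification that the flagged mixture we construct is a genuine \emph{quantum} extension of $p$, not merely a no-signaling extension. The essential observation is that the set of quantum correlations is closed under convex combinations, and this closure can be lifted to the level of extensions in a structure-preserving way by using a classical flag register.

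First I would take arbitrary quantum extensions $\tau_{ABCEXYZ}$ of $t$ and $\gamma_{ABCEXYZ}$ of $r$, each built as in \eqref{q_ext} from underlying states $\zeta^{t}_{\tilde{A}\tilde{B}\tilde{C}E}$, $\zeta^{r}_{\tilde{A}\tilde{B}\tilde{C}E}$ and associated POVMs. By enlarging each party's Hilbert space to a direct sum (and extending the POVMs block-diagonally so that they act trivially on the other summand), I can arrange that both underlying states live on a common Hilbert space. I then form the underlying state $\zeta_{\tilde{A}\tilde{B}\tilde{C}E\Lambda} = \lambda\,\zeta^{t}_{\tilde{A}\tilde{B}\tilde{C}E}\otimes[0]_{\Lambda} + (1-\lambda)\,\zeta^{r}_{\tilde{A}\tilde{B}\tilde{C}E}\otimes[1]_{\Lambda}$, where $\Lambda$ is a classical register appended to Eve's system. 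Applying the common POVMs to this state and carrying $\Lambda$ over to the extension system produces exactly the flagged state
\begin{multline}
\rho_{ABCEXYZ\Lambda} = \sum_{a,b,c,x,y,z}q(x,y,z)[abcxyz]_{ABCXYZ}\otimes\Big(\lambda\,t(a,b,c\vert x,y,z)\,\tau_{E}^{abcxyz}\otimes[0]_{\Lambda}\\
+ (1-\lambda)\,r(a,b,c\vert x,y,z)\,\gamma_{E}^{abcxyz}\otimes[1]_{\Lambda}\Big),
\end{multline}
which is therefore a bona fide quantum extension of $p$ with extension system $E\Lambda$.

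Next I would exploit the fact that conditional total correlation is linear under classical flagging: since $\Lambda$ is a classical register in $\rho$, one has
\begin{equation}
I(A;B;C\vert EXYZ\Lambda)_{\rho} = \lambda\,I(A;B;C\vert EXYZ)_{\tau} + (1-\lambda)\,I(A;B;C\vert EXYZ)_{\gamma}.
\end{equation}
Combining this with the fact that the infimum over quantum extensions of $p$ is bounded above by the value on any particular such extension, and then letting $\tau$ and $\gamma$ be arbitrary quantum extensions of $t$ and $r$, gives
\begin{equation}
\inf_{\text{q.ext. of }p} I(A;B;C\vert EXYZ)_{\zeta} \leq \lambda\inf_{\text{q.ext. of }t} I(A;B;C\vert EXYZ)_{\tau} + (1-\lambda)\inf_{\text{q.ext. of }r} I(A;B;C\vert EXYZ)_{\gamma}.
\end{equation}
Taking the supremum over input distributions $q(x,y,z)$ on both sides (using the same $q$ on the right) and dividing by two yields the claimed convexity bound.

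The routine parts of this argument mirror the TINL case verbatim, so the only real technical point is the first paragraph's construction: verifying that the flagged mixture is a valid quantum extension. The cleanest way to discharge this is the direct-sum construction of the underlying state and POVMs described above; an equivalent alternative would be to invoke directly the known fact that $\mathcal{Q}$ is closed under convex combinations and then note that any purification/extension on the enlarged Hilbert space yields the flagged $\rho_{ABCEXYZ\Lambda}$ upon measurement. Either route makes this step essentially bookkeeping, so I do not anticipate any serious obstacle beyond carefully writing out the Hilbert-space embedding.
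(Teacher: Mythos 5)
Your proposal is correct and follows essentially the same strategy as the paper's proof: construct a flagged quantum extension of $p$ from arbitrary quantum extensions of $t$ and $r$, use linearity of conditional total correlation over the classical flag, and then optimize. The only cosmetic difference is that you realize the flagged extension via a direct sum of the underlying Hilbert spaces with block-diagonal POVMs, whereas the paper appends ancillary flag qubits $A'B'C'E'$ and uses a controlled measurement; these are interchangeable bookkeeping choices for the same construction.
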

    \begin{proof}
        Consider the following quantum extensions of $t$, $r$, and $p$:
        \begin{align}\label{ext_conv-1-q}
            \tau_{ABCEXYZ} & = \sum_{a,b,c,x,y,z} q(x,y,z)t(a,b,c\vert x,y,z)[abcxyz]_{ABCXYZ}\otimes\tau_{E}^{abcxyz},\\
        \label{ext_conv-2-q}
            \gamma_{ABCEXYZ} & = \sum_{a,b,c,x,y,z} q(x,y,z)r(a,b,c\vert x,y,z)[abcxyz]_{ABCXYZ}\otimes\gamma_{E}^{abcxyz},\\
        \label{ext_conv-3-q}
            \zeta_{ABCEXYZ} & = \sum_{a,b,c,x,y,z} q(x,y,z)p(a,b,c\vert x,y,z)[abcxyz]_{ABCXYZ}\otimes\rho_{E}^{abcxyz}.
        \end{align}
        Let $\tau_{\tilde{A}\tilde{B}\tilde{C}}$ be a quantum state that, along with the POVMs characterized by $\{\Pi^{(x)}_{a}\}_{a}$, $\{\Pi^{(y)}_{b}\}_{b}$, and  $\{\Pi^{(z)}_{c}\}_{c}$, yield the correlation $t(a,b,c\vert x,y,c)$. Let $\tau_{\tilde{A}\tilde{B}\tilde{C}E}$ be a quantum extension of $\tau_{\tilde{A}\tilde{B}\tilde{C}}$. Similarly, let $\gamma_{\tilde{A}\tilde{B}\tilde{C}}$ be a quantum state that, along with the POVMs characterized by $\{\Lambda^{(x)}_{a}\}_{a}$, $\{\Lambda^{(y)}_{b}\}_{b}$, and $\{\Lambda^{(z)}_{c}\}_{c}$, yield the correlation $r(a,b,c \vert x,y,z)$. Let $\gamma_{\tilde{A}\tilde{B}\tilde{C}E}$ be a quantum extension of $\gamma_{\tilde{A}\tilde{B}\tilde{C}}$. Then, a particular quantum state that realizes the correlation $p(a, b ,c\vert x, y,z)$ is the following:
        \begin{align}
        \rho_{\tilde{A}\tilde{B}\tilde{C}A^{\prime} B^{\prime}C^{\prime}} &=\lambda \tau_{\tilde{A}\tilde{B}\tilde{C}} \otimes\ketbra*{000}{000}_{A^{\prime} B^{\prime}C^{\prime}}+(1-\lambda) \gamma_{\tilde{A}\tilde{B}\tilde{C}} \otimes \ketbra*{111}{111}_{A^{\prime} B^{\prime}C^{\prime}}.
        \end{align}
        Then,
        \begin{multline}
            p(a, b,c \vert x, y,z)=\operatorname{Tr} \!\left[\Pi^{(x)}_{a} \otimes \Pi^{(y)}_{b} \otimes \Pi^{(z)}_{c}\otimes\ketbra*{000}{000}_{A^{\prime} B^{\prime}C^{\prime}}(\rho_{\tilde{A}\tilde{B}\tilde{C}A^{\prime} B^{\prime}C^{\prime}})\right]+\\ 
            \operatorname{Tr}\!\left[\Lambda^{(x)}_{a} \otimes \Lambda^{(y)}_{b}\otimes \Lambda^{(z)}_{c} \otimes\ketbra*{111}{111}_{A^{\prime} B^{\prime}C^{\prime}}(\rho_{\tilde{A}\tilde{B}\tilde{C}A^{\prime} B^{\prime}C^{\prime}})\right], 
        \end{multline}
        where it is understood that Alice is measuring $\sigma_{Z}$ on her system $A^{\prime}$, Bob is measuring $\sigma_{Z}$ on $B^{\prime}$, and Charlie is measuring $\sigma_{Z}$ on $C^{\prime}$, in addition to the other measurements on their systems $A$, $B$, and $C$. Now, consider the following quantum extension of $\rho_{A B C A^{\prime} B^{\prime}C^{\prime}}$,
        \begin{align}
            \rho_{\tilde{A}\tilde{B}\tilde{C}A^{\prime} B^{\prime}C^{\prime}}=\lambda \tau_{\tilde{A}\tilde{B}\tilde{C}E}\otimes\ketbra*{0000}{0000}_{A^{\prime}B^{\prime}C^{\prime}E^{\prime}}+ (1-\lambda)\gamma_{\tilde{A}\tilde{B}\tilde{C}E}\otimes\ketbra*{1111}{1111}_{A^{\prime} B^{\prime}C^{\prime}E^{\prime}}.
        \end{align}
        Furthermore, consider the following particular quantum extension of $\zeta_{ABCEXYZ}$:
        \begin{multline}
            \rho_{ABCXYZEE^{\prime}}= \sum_{a,b,c,x,y,z}  p(x,y,z)\{( \lambda)t(a,b,c\vert x,y,z) [a,b,c,x,y,z]\otimes\tau_{E}^{abcxyz}\otimes[0]_{E^{\prime}} \nonumber\\
            + (1 - \lambda)r(a,b,c\vert x,y,z) [a,b,c,x,y,z]\otimes\gamma_{E}^{abcxyz}\otimes[1]_{E^{\prime}}\}.
        \end{multline}
        Then following similar arguments given in the proof of Theorem~\ref{theorem:tri-conv}, we obtain
        \begin{align}
            N_{Q}(A;B;C)_{p} \leq \lambda N_{Q}(A;B;C)_{t} + (1 - \lambda) N_{Q}(A;B;C)_{r}.
        \end{align}
        This concludes the proof.
    \end{proof}

    \section{Monotonicity under Local Operations and Common Randomness}

\label{subsec:tri-mono-locr}

    Local Operations and Common Randomness (LOCR) is the set of free operations within the setup of conference key agreement. These free operations are chosen so that they are consistent with the prerequisites of the parity-CHSH game \cite{ribeiro2018fully}, which are similar to those of the CHSH game \cite{PhysRevLett.102.120401,PhysRevA.84.042112}. By common randomness, we mean that all parties have access to a common random variable and an instance that is made available to all parties before each round of the protocol. Using this common randomness, all parties can perform local operations and pre- and post-processing on their inputs and outputs. LOCR can be applied to an input distribution $p_{i}(a,b,c\vert x,y,z)$ to arrive at an output distribution $p_{f}(a_{f},b_{f},c_{f}\vert x_{f},y_{f},z_{f})$ as follows:
    \begin{multline}
        p_{f}(a_{f},b_{f},c_{f}\vert x_{f},y_{f},z_{f}) = \sum_{a,b,c,x,y,z} O^{(L)}(a_{f},b_{f},c_{f}\vert x_{f},y_{f},z_{f},a,b,c,x,y,z)\\
        p_{i}(a,b,c\vert x,y,z)I^{(L)}(x,y,z\vert x_{f},y_{f},z_{f}),
        \label{eq:input-output-rel-LOCR}
    \end{multline}
    where
    \begin{multline}\label{box1}
        O^{(L)}(a_{f},b_{f},c_{f}\vert x_{f},y_{f},z_{f},a,b,c,x,y,z) = \sum_{\lambda_{2}} p(\lambda_{2})O_{A}(a_{f}\vert a,x,x_{f},\lambda_{2})\\
        \times O_{B}(b_{f}\vert b,y,y_{f},\lambda_{2})O_{C}(c_{f}\vert c,z,z_{f},\lambda_{2}),
    \end{multline} 
    and 
    \begin{align}\label{box2}
        I^{(L)}(x,y,z\vert x_{f},y_{f},z_{f}) = \sum_{\lambda_{1}} p(\lambda_{1})I_{A}(x\vert x_{f},\lambda_{1})I_{B}(y\vert y_{f},\lambda_{1})I_{C}(z\vert z_{f},\lambda_{1}).
    \end{align}
    The bipartite case has been considered previously in \cite{PhysRevA.95.032118}.
    In the above equations, $O_{A}$, $O_{B}$, $O_{C}$, $I_{A}$, $I_{B}$, and $I_{C}$ are the pre-agreed local operations, and $\lambda_{1}$ and $\lambda_{2}$ represent the common randomness shared between the parties before and after obtaining the outputs from the initial correlation, respectively.
    
    \begin{theorem}[Monotonicity under LOCR]\label{theorem:tri-mono-locr}
        Let $p_{i}(a,b,c\vert x,y,z)$ be a no-signaling correlation, and let $p_{f}(a_{f},b_{f},c_{f}\vert x_{f},y_{f},z_{f})$ result from the action of local operations and common randomness on $p_{i}(a,b,c\vert x,y,z)$, as described in \eqref{eq:input-output-rel-LOCR}. Then,
        \begin{align}
            N(A_{i};B_{i};C_{i})_{p_{i}} \geq N(A_{i};B_{f};C_{f})_{p_{f}}.
            \label{eq:INL-monotone-ineq}
        \end{align}
    \end{theorem}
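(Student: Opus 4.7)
The strategy is to turn the LOCR itself into the data-processing operation, following the standard pattern for monotonicity of entropic quantifiers. Fix an arbitrary input distribution $q_{f}(x_{f},y_{f},z_{f})$, and let $q_{i}(x,y,z)=\sum_{x_{f},y_{f},z_{f}} q_{f}(x_{f},y_{f},z_{f})\, I^{(L)}(x,y,z|x_{f},y_{f},z_{f})$ be the induced distribution of the inputs to $p_{i}$. Without loss of generality, enlarge $\Lambda_{1},\Lambda_{2}$ so that the maps $I_{A},I_{B},I_{C},O_{A},O_{B},O_{C}$ become deterministic given their inputs (any residual stochasticity can be absorbed into the common-randomness registers). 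The plan is then to show that for every no-signaling extension $\rho^{i}$ of $p_{i}$ with input distribution $q_{i}$ one can build a no-signaling extension $\rho^{f}$ of $p_{f}$ with input distribution $q_{f}$ satisfying $I(A_{f};B_{f};C_{f}|E'X_{f}Y_{f}Z_{f})_{\rho^{f}}\leq I(A;B;C|EXYZ)_{\rho^{i}}$; taking the infimum over $\rho^{i}$ and the supremum over $q_{f}$ (noting that the induced $q_{i}$ is just one member of the set one optimizes over on the right) then yields \eqref{eq:INL-monotone-ineq}.

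To construct $\rho^{f}$, form the joint classical-quantum state $\omega$ on registers $X_{f}Y_{f}Z_{f}\Lambda_{1}\Lambda_{2}XYZABCA_{f}B_{f}C_{f}E$ by sampling $(X_{f},Y_{f},Z_{f})\sim q_{f}$ and $(\Lambda_{1},\Lambda_{2})$ from their marginals, generating $(X,Y,Z)$ locally via $I^{(L)}$, drawing $(A,B,C,E)$ from $\rho^{i}$ conditioned on $(X,Y,Z)$, and computing $(A_{f},B_{f},C_{f})$ locally via $O^{(L)}$; let $\rho^{f}$ be the marginal of $\omega$ on $A_{f}B_{f}C_{f}X_{f}Y_{f}Z_{f}E'$ with $E'=E\Lambda_{1}\Lambda_{2}$. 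I would then verify that $\rho^{f}$ is a valid no-signaling extension of $p_{f}$ by expanding $\sum_{a_{f}} p_{f}(a_{f}b_{f}c_{f}|x_{f}y_{f}z_{f})\, \sigma_{E'}^{a_{f}b_{f}c_{f}x_{f}y_{f}z_{f}}$: the identity $\sum_{a_{f}}O_{A}=1$ eliminates the $a_{f}$-dependence, the no-signaling of $\rho^{i}$ makes $\sum_{a}p_{i}(abc|xyz)\rho_{E}^{abcxyz}$ independent of $x$, and finally $\sum_{x}I_{A}(x|x_{f},\lambda_{1})=1$ removes the $x_{f}$-dependence, with analogous computations for $b_{f},y_{f}$ and $c_{f},z_{f}$.

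The key inequality follows from two facts about $\omega$. First, because $(X,Y,Z)$ is a deterministic function of $(X_{f},Y_{f},Z_{f},\Lambda_{1})$ and $(A,B,C)$ depends on the remaining classical registers only through $(X,Y,Z)$, every entropy of $(A,B,C)$ conditioned on $E\Lambda_{1}\Lambda_{2}X_{f}Y_{f}Z_{f}$ matches the corresponding entropy conditioned on $EXYZ$, so that $I(A;B;C|E\Lambda_{1}\Lambda_{2}X_{f}Y_{f}Z_{f})_{\omega}=I(A;B;C|EXYZ)_{\rho^{i}}$. Second, since $A_{f}=f_{A}(A,X,X_{f},\Lambda_{2})$ is a local deterministic function of $A$ whose remaining arguments lie entirely in the conditioning (and similarly for $B_{f},C_{f}$), data processing of conditional total correlation under local channels gives $I(A_{f};B_{f};C_{f}|E\Lambda_{1}\Lambda_{2}X_{f}Y_{f}Z_{f})_{\omega}\leq I(A;B;C|E\Lambda_{1}\Lambda_{2}X_{f}Y_{f}Z_{f})_{\omega}$. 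Combining these two steps, and identifying the left-hand side with $I(A_{f};B_{f};C_{f}|E'X_{f}Y_{f}Z_{f})_{\rho^{f}}$, finishes the argument. The main obstacle is the careful choice of Eve's register $E'=E\Lambda_{1}\Lambda_{2}$: including the intermediate inputs $X,Y,Z$ in $E'$ would break the no-signaling condition for $\rho^{f}$, since $I_{A}(x|x_{f},\lambda_{1})$ depends on $x_{f}$, while omitting $\Lambda_{1},\Lambda_{2}$ would destroy the determinism needed to line up the two entropic identities above.
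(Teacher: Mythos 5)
Your proposal is correct, and the overall structure — construct a no-signaling extension $\rho^{f}$ of $p_{f}$ out of an arbitrary no-signaling extension $\rho^{i}$ of $p_{i}$ together with the LOCR boxes, then argue via data processing that the resulting conditional total correlation cannot increase — is the same strategy the paper follows. What differs, and what makes your route genuinely distinct in its technical content, is how the key identity linking the two conditional total correlations is established. The paper works directly with the stochastic boxes $I^{(L)}, O^{(L)}$: after data processing it arrives at $I(AX;BY;CZ\vert EX_{f}Y_{f}Z_{f}\Lambda_{1})_{\rho}$, invokes Theorem~\ref{theorem_2} (the tripartite chain rule) to expand this into $I(A;B;C\vert EXYZX_{f}Y_{f}Z_{f}\Lambda_{1})$ plus four residual terms, and then shows each residual term vanishes using the no-signaling constraints \eqref{nosig2} on the extension $\rho_{E}^{abcxyz}$ together with the product structure of $I^{(L)}$ in \eqref{box2}. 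You sidestep the chain rule entirely by first absorbing all residual local stochasticity in $I_{A},I_{B},I_{C},O_{A},O_{B},O_{C}$ into the common-randomness registers $\Lambda_{1},\Lambda_{2}$ (which is without loss of generality, since deterministic maps controlled by shared randomness can simulate any stochastic local maps with private randomness, and proving monotonicity for the larger class is enough); then $(X,Y,Z)$ becomes a deterministic function of $(X_{f},Y_{f},Z_{f},\Lambda_{1})$, so the Markov structure of $\omega$ gives $I(A;B;C\vert E\Lambda_{1}\Lambda_{2}X_{f}Y_{f}Z_{f})_{\omega}=I(A;B;C\vert EXYZ)_{\rho^{i}}$ directly, with no residual terms to kill. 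Your observation about which registers to place in $E'$ is exactly right and is implicit but unstated in the paper: including $X,Y,Z$ (or $A,B,C$) in the eavesdropper's extension system would destroy the no-signaling property of $\rho^{f}$, and the determinization is precisely what lets you leave them out while still making the entropic bookkeeping go through. The trade-off is that your approach front-loads the use of the no-signaling of $\rho^{i}$ into the verification that $\rho^{f}$ is a valid no-signaling extension, whereas the paper uses it to zero out the residual chain-rule terms; both uses are essential and neither proof avoids relying on it.
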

    \begin{proof}
        Consider the following respective no-signaling extensions of $p_{f}(a_{f},b_{f},c_{f}\vert x_{f},y_{f},z_{f})$ and $p_{i}(a,b,c\vert x,y,z)$:
        \begin{multline}\label{ext4}
            \zeta_{A_{f}B_{f}C_{f}EX_{f}Y_{f}Z_{f}} = \sum_{a_{f},b_{f},c_{f},x_{f},y_{f},z_{f}} q(x_{f},y_{f},z_{f})p_{f}(a_{f},b_{f},c_{f}\vert x_{f},y_{f},z_{f})\\ [a_{f}b_{f}c_{f}x_{f}y_{f}z_{f}]_{A_{f}B_{f}C_{f}EX_{f}Y_{f}Z_{f}}\otimes\zeta_{E}^{a_{f},b_{f},c_{f} x_{f},y_{f},z_{f}},
        \end{multline}
        and
        \begin{align}\label{ext5}
            \tau_{ABCEXYZ}= \sum_{a,b,c,x,y,z} q(x,y,z)p_{i}(a,b,c\vert x,y,z)[abcxyz]_{ABCEXYZ}\otimes\rho_{E}^{abcxyz}. 
        \end{align}
        Let us embed $p_{f}(a_{f},b_{f},c_{f}\vert x_{f},y_{f},z_{f})$ in the following quantum state:
        \begin{multline}
            \rho_{A_{f}B_{f}C_{f}X_{f}Y_{f}Z_{f}}= \sum_{a_{f},b_{f},c_{f},x_{f},y_{f},z_{f}} q(x_{f},y_{f},z_{f})\sum_{a,b,c,x,y,z}\sum_{\lambda_{2}}p(\lambda_{2})O_{A}(a_{f}\vert a,x,x_{f},\lambda_{2})\\
            O_{B}(b_{f}\vert b,y,y_{f},\lambda_{2})O_{C}(c_{f}\vert c,z,z_{f},\lambda_{2})p_{i}(a,b,c\vert x,y,z)\sum_{\lambda_{1}}p(\lambda_{1})I_{A}(x\vert x_{f},\lambda_{1})I_{B}(y\vert y_{f},\lambda_{1})\\ I_{B}(y\vert y_{f},\lambda_{1})I_{C}(z\vert z_{f},\lambda_{1})[a_{f}b_{f}c_{f}x_{f}y_{f}z_{f}]_{A_{f}B_{f}C_{f}X_{f}Y_{f}Z_{f}}.
        \end{multline}
        A particular no-signaling extension of this state is as follows:
        \begin{align}
            &\rho_{ABCA_{f}B_{f}C_{f}EXYZX_{f}Y_{f}Z_{f}\Lambda_{1}\Lambda_{2}}\nonumber\\
            &= \sum_{a_{f},b_{f},c_{f},x_{f},y_{f},z_{f}} q(x_{f},y_{f},z_{f})\sum_{a,b,c,x,y,z}\sum_{\lambda_{2}}p(\lambda_{2})O_{A}(a_{f}\vert a,x,x_{f},\lambda_{2})O_{B}(b_{f}\vert b,y,y_{f},\lambda_{2})\times\nonumber\\
            &\qquad \qquad O_{C}(c_{f}\vert c,z,z_{f},\lambda_{2})p_{i}(a,b,c\vert x,y,z)\sum_{\lambda_{1}}p(\lambda_{1})I_{A}(x\vert x_{f},\lambda_{1})I_{B}(y\vert y_{f},\lambda_{1})I_{B}(y\vert y_{f},\lambda_{1})\times\nonumber\\
            &\qquad \qquad I_{C}(z\vert z_{f},\lambda_{1}) [abcxyza_{f}b_{f}c_{f}x_{f}y_{f}z_{f}]_{ABCA_{f}B_{f}C_{f}XYZX_{f}Y_{f}Z_{f}}\otimes\rho_{E}^{abcxyz}\otimes[\lambda_{1}\lambda_{2}]_{\Lambda_{1}\Lambda_{2}}.
        \end{align}
        Now let us begin with the following inequality:
        \begin{align}
            \inf_{\text{ext. in }\eqref{ext4}} I(A_{f};B_{f};C_{f}\vert EX_{f}Y_{f}Z_{f})_{\zeta} \leq I(A_{f};B_{f};C_{f}\vert EX_{f}Y_{f}Z_{f}\Lambda_{1}\Lambda_{2})_{\rho}.
        \end{align}
        The above inequality holds for a   specific choice $\rho_{ABCA_{f}B_{f}C_{f}EXYZX_{f}Y_{f}Z_{f}\Lambda_{1}\Lambda_{2}}$ of a no-signaling extension  of $p_{f}(a_{f},b_{f},c_{f}\vert x_{f},y_{f},z_{f})$. Using data processing of conditional total correlation under local channels, we find that
        \begin{align}
            I(A_{f};B_{f};C_{f}\vert EX_{f}Y_{f}Z_{f}\Lambda_{1}\Lambda_{2})_{\rho} \leq I(AXX_{f}\Lambda_{2};BYY_{f}\Lambda_{2};CZZ_{f}\Lambda_{2}\vert EX_{f}Y_{f}Z_{f}\Lambda_{1}\Lambda_{2})_{\rho}.
        \end{align}
        Since $X_{f},Y_{f},Z_{f}$, and $\Lambda_{2}$ are classical copies of themselves, it follows that
        \begin{multline}
            I(AXX_{f}\Lambda_{2};BYY_{f}\Lambda_{2};CZZ_{f}\Lambda_{2}\vert EX_{f} Y_{f} Z_{f}\Lambda_{1}\Lambda_{2})_{\rho}\\
            = I(AX;BY;CZ\vert EX_{f}Y_{f}Z_{f}\Lambda_{1}\Lambda_{2})_{\rho}.
        \end{multline}
        Since none of $A$, $X$, $B$, $Y$, $C$,  and $Z$ depend on $\Lambda_{2}$, we conclude that
        \begin{align}
            I(AX;BY;CZ\vert EX_{f}Y_{f}Z_{f}\Lambda_{1}\Lambda_{2})_{\rho} = I(AX;BY;CZ\vert EX_{f}Y_{f}Z_{f}\Lambda_{1})_{\rho}.
        \end{align}
        Hence,
        \begin{align}
            \inf_{\text{ext. in }\eqref{ext4}} I(A_{f};B_{f};C_{f}\vert EX_{f}Y_{f}Z_{f})_{\zeta} \leq I(AX;BY;CZ\vert EX_{f}Y_{f}Z_{f}\Lambda_{1})_{\rho} .
        \end{align}
        Using \eqref{eqnA}, we find that
        \begin{multline}
            I(AX;BY;CZ\vert EX_{f}Y_{f}Z_{f}\Lambda_{1})_{\rho}= I(A;B;C\vert EX_{f}Y_{f}Z_{f}\Lambda_{1}XYZ)_{\rho}\\
            + I(X;Y;Z\vert YEX_{f}Y_{f}Z_{f}\Lambda_{1})_{\rho} + I(YZ;A\vert XEX_{f}Y_{f}Z_{f}\Lambda_{1})_{\rho} \\
            + I(XZ;B\vert YEX_{f}Y_{f}Z_{f}\Lambda_{1})_{\rho} + I(XY;C\vert ZEX_{f}Y_{f}Z_{f}\Lambda_{1})_{\rho}.
        \end{multline}
        The information-theoretic quantities $I(YZ;A\vert XEX_{f}Y_{f}Z_{f}\Lambda_{1})_{\rho}$, $I(XZ;B\vert YEX_{f}Y_{f}Z_{f}\Lambda_{1})_{\rho}$, and $I(XY;C\vert ZEX_{f}Y_{f}Z_{f}\Lambda_{1})_{\rho}$ are equal to zero due to the no-signaling constraints elucidated in \eqref{nosig2} and the structure in \eqref{box2} of the local box $I_L$. The information-theoretic quantity $I(X;Y;Z\vert YEX_{f}Y_{f}Z_{f}\Lambda_{1})_{\rho}$ is equal to zero due to \eqref{box2}. The structure of $\rho$  implies that all the terms are equal to zero, except for the first term. So,
        \begin{align}
            \inf_{\text{ext. in }\eqref{ext4}} I(A_{i};B_{f};C_{f}\vert EX_{f}Y_{f}Z_{f})_{\zeta} & \leq I(A;B;C\vert XYZEX_{f}Y_{f}Z_{f}\Lambda_{1})_{\rho}\\
            & = I(A;B;C\vert XYZE)_{\tau},
        \end{align}
        where the equality is a consequence of the structure of  $\rho_{ABCEXYZX_{f}Y_{f}Z_{f}\Lambda_{1}}$.
        Since $\tau$ is an arbitrary no-signaling extension of $p_i$, we conclude that
        \begin{align}
            \inf_{\text{ext. in }\eqref{ext4}} I(A_{i};B_{f};C_{f}\vert EX_{f}Y_{f}Z_{f})_{\zeta} \leq \inf_{\text{ext. in }\eqref{ext5}} I(A;B;C\vert XYZE)_{\tau}.
        \end{align}
        By optimizing over arbitrary input probability distributions, we conclude that
        \begin{align}
            \sup_{q}\inf_{\text{ext. in }\eqref{ext4}} I(A_{f};B_{f};C_{f}\vert EX_{f}Y_{f}Z_{f})_{p_{f}} &\leq \sup_{q}\inf_{\text{ext. in }\eqref{ext5}} I(A;B;C\vert XYZE)_{p_{i}},
        \end{align}
        which is the desired inequality in \eqref{eq:INL-monotone-ineq}.
    \end{proof}

    \begin{theorem}[Monotonicity under LOCR of QTINL]\label{theorem:tri-mono-locr-q}
        Let $p_{i}(a,b,c\vert x,y,z)$ be a quantum correlation, and let $p_{f}(a_{f},b_{f},c_{f}\vert x_{f},y_{f},z_{f})$ result from the action of local operations and common randomness on $p_{i}(a,b,c\vert x,y,z)$, as described in \eqref{eq:input-output-rel-LOCR}. Then
        \begin{align}
            N_{Q}(A_{i};B_{i};C_{i})_{p_{i}} \geq N_{Q}(A_{i};B_{f};C_{f})_{p_{f}}.
            \label{eq:q-intr-nl-monotoone}
        \end{align}
    \end{theorem}
    \begin{proof}
        First, let us embed $p_{f}(a_{f},b_{f},c_{f}\vert x_{f},y_{f},z_{f})$ in a quantum state:
        \begin{multline}
            \zeta_{A_{f}B_{f}C_{f}X_{f}Y_{f}Z_{f}} = \\ \sum_{a_{f},b_{f},c_{f},x_{f},y_{f},z_{f}} q(x_{f},y_{f},z_{f})p_{f}(a_{f},b_{f},c_{f}\vert x_{f},y_{f},z_{f}) [a_{f}b_{f}c_{f}x_{f}y_{f}z_{f}]_{A_{f}B_{f}C_{f} X_{f}Y_{f}Z_{f}},
        \end{multline}
    where $q(x_{f},y_{f},z_{f})$ is an arbitrary probability distribution for $x_{f}$, $y_{f}$, and $z_{f}$. The set~$\mathcal{Q}$ of quantum correlations  is closed under the action of local operations and common randomness, implying that $p_{f}(a_{f},b_{f},c_{f}\vert x_{f},y_{f},z_{f}) \in\mathcal{Q}$. Since $p_{f}(a_{f},b_{f},c_{f}\vert x_{f},y_{f},z_{f})$ is also a quantum correlation, we know that there exists an underlying state $\zeta_{\tilde{A}_{f}\tilde{B}_{f}\tilde{C}_{f}}$and POVMs $\{\Pi^{(x_{f})}_{a_{f}}\}_{a_{f}}$, $\{\Pi^{(y_{f})}_{b_{f}}\}_{b_{f}}$, and $\{\Pi^{(z_{f})}_{c_{f}}\}_{c_{f}}$ such that
    \begin{align}
        p_{f}(a_{f},b_{f},c_{f}\vert x_{f},y_{f},z_{f})= \operatorname{Tr}\!\left[\left(\Pi^{(x_{f})}_{a_{f}}\otimes\Pi^{(y_{f})}_{b_{f}}\otimes\Pi^{(z_{f})}_{c_{f}}\right)\zeta_{\tilde{A}_{f}\tilde{B}_{f}\tilde{C}_{f}}\right].
    \end{align}
    An arbitrary quantum extension of the state $\zeta_{A_{f}B_{f}C_{f}X_{f}Y_{f}Z_{f}}$ is given by
    \begin{multline}\label{eqn4-q}
        \zeta_{A_{f}B_{f}C_{f}EX_{f}Y_{f}Z_{f}} = \sum_{a_{f},b_{f},c_{f},x_{f},y_{f},z_{f}} q(x_{f},y_{f},z_{f})p_{f}(a_{f},b_{f},c_{f}\vert x_{f},y_{f},z_{f})\\
        [a_{f}b_{f}c_{f}x_{f}y_{f}z_{f}]_{A_{f}B_{f}C_{f}EX_{f}Y_{f}Z_{f}}\otimes\zeta_{E}^{a_{f},b_{f},c_{f} x_{f},y_{f},z_{f}},
    \end{multline}
    where
    \begin{align}
        \zeta_{E}^{a_{f},b_{f},c_{f} x_{f},y_{f},z_{f}}=\frac{1}{p_{f}(a_{f},b_{f},c_{f}\vert x_{f},y_{f},z_{f})}\operatorname{Tr}\!\left[\left(\Pi^{(x_{f})}_{a_{f}}\otimes\Pi^{(y_{f})}_{b_{f}}\otimes\Pi^{(z_{f})}_{c_{f}}\otimes\mathbb{I}\right)\zeta_{\tilde{A}_{f}\tilde{B}_{f}\tilde{C}_{f}E}\right],
    \end{align}
    and $\zeta_{A_{f}B_{f}C_{f}E}$ is an quantum extension of $\zeta_{A_{f}B_{f}C_{f}}$. Now, we know that
    \begin{multline}
            p_{f}(a_{f},b_{f},c_{f}\vert x_{f},y_{f},z_{f}) = \sum_{a,b,c,x,y,z} O^{(L)}(a_{f},b_{f},c_{f}\vert x_{f},y_{f},z_{f},a,b,c,x,y,z)\\
            p_{i}(a,b,c\vert x,y,z)I^{(L)}(x,y,z\vert x_{f},y_{f},z_{f}),
        \end{multline}
    as well as the facts that $I^{(L)}(x,y,z\vert x_{f},y_{f},z_{f})$ and $O^{(L)}(a_{f},b_{f},c_{f}\vert x_{f},y_{f},z_{f},a,b,c,x,y,z)$ are local correlations. Therefore, there exist separable states $\zeta_{XYZ}$ and $\rho_{\tilde{A}_{f}\tilde{B}_{f}\tilde{C}_{f}}$, along with POVMs that result in the correlations $I^{(L)}$ and $O^{(L)}$. That is,
    \begin{align}
        I^{(L)}(x,y,z\vert x_{f},y_{f},z_{f})= \operatorname{Tr}\!\left[\left(\Pi_{x}^{(x_{f})}\otimes\Pi_{y}^{(y_{f})}\otimes\Pi_{z}^{(z_{f})}\right)\zeta_{XYZ}\right],
    \end{align}
    and 
    \begin{align}
        O^{(L)}(a_{f},b_{f},c_{f}\vert x_{f},y_{f},z_{f},a,b,c,x,y,z)= \operatorname{Tr}\!\left[\left(\Pi_{a_{f}}^{(x_{f},a,x)}\otimes\Pi_{b_{f}}^{(y_{f},b,y)}\otimes\Pi_{c_{f}}^{(z_{f},c,z)}\right)\rho_{\tilde{A}_{f}\tilde{B}_{f}\tilde{C}_{f}}\right].
    \end{align}
    Furthermore, we know that the correlation $p_{i}(a,b,c\vert x,y,z)$ is a quantum correlation. Thus, there exists an underlying state $\zeta_{\tilde{A}\tilde{B}\tilde{C}}$ and POVMs $\{\Pi^{(x)}_{a}\}_a$, $\{\Pi^{(y)}_{b}\}_b$, and $\{\Pi^{(z)}_{c}\}_b$ such that
    \begin{multline}
        p(a_{f}, b_{f}, c_{f} \vert x_{f}, y_{f}, z_{f}) =\\
        \sum_{a,b,c,x,y,z} \operatorname{Tr}\!\Big[\left(\Pi_{a_{f}}^{(x_{f},a,x)}\otimes\Pi_{b_{f}}^{(y_{f},b,y)}\otimes\Pi_{c_{f}}^{(z_{f},c,z)}\otimes\Pi_{x}^{(x_{f})}\otimes\Pi_{y}^{(y_{f})}\otimes\Pi_{z}^{(z_{f})}\otimes\Pi^{(x)}_{a}\otimes\Pi^{(y)}_{b}\otimes\Pi^{(z)}_{c}\right)\\\left(\rho_{\tilde{A}_{f}\tilde{B}_{f}\tilde{C}_{f}}\otimes\zeta_{XYZ}\otimes \zeta_{\tilde{A}\tilde{B}\tilde{C}}\right)
        \Big].
    \end{multline}
    Since $\zeta_{XYZ}$ is a separable state, we can write it as $\zeta_{XYZ}=\sum_{\lambda_{1}} p\left(\lambda_{1}\right) \zeta_{X}^{\lambda_{1}} \otimes \zeta_{Y}^{\lambda_{1}} \otimes \zeta_{Z}^{\lambda_{1}}.$ Let  $\zeta_{XYZ\Lambda_{1}}=\sum_{\lambda_{1}} p\left(\lambda_{1}\right) \zeta_{X}^{\lambda_{1}} \otimes \zeta_{Y}^{\lambda_{1}} \otimes \zeta_{Z}^{\lambda_{1}}\otimes[\lambda_{1}]_{\Lambda_{1}}$ be a particular quantum extension of $\zeta_{XYZ}$. Similarly, let $\rho_{\tilde{A}_{f}\tilde{B}_{f}\tilde{C}_{f}\Lambda_{2}}$ be a quantum extension of $\rho_{\tilde{A}_{f}\tilde{B}_{f}\tilde{C}_{f}}$ and $\zeta_{\tilde{A}\tilde{B}\tilde{C}E}$ an extension of~$\zeta_{\tilde{A}\tilde{B}\tilde{C}}$.
    A particular quantum extension of the state in \eqref{eqn4-q} is given by
    \begin{multline}
        \rho_{A_{f}B_{f}C_{f}EX_{f}Y_{f}Z_{f}\Lambda_{1}\Lambda_{2}}= \sum_{a_{f},b_{f},c_{f},x_{f},y_{f},z_{f}} p(x_{f},y_{f},z_{f}) q_{f}(a_{f},b_{f},c_{f}\vert x_{f},y_{f},z_{f})\times \\ [a_{f}b_{f}c_{f}x_{f}y_{f}z_{f}]_{A_{f}B_{f}C_{f}X_{f}Y_{f}Z_{f}}\otimes\rho_{E}^{abcxyz}\otimes[\lambda_{1}\lambda_{2}]_{\Lambda_{1}\Lambda_{2}},
    \end{multline}
    where
    \begin{align}
        \rho_{E}^{a,b,c, x,y,z}=\frac{1}{p(a,b,c,\vert x,y,z)}\operatorname{Tr}\!\left[\left(\Pi^{(x)}_{a}\otimes\Pi^{(y)}_{b}\otimes\Pi^{(z)}_{c}\otimes\mathbb{I}\right)\zeta_{\tilde{A}\tilde{B}\tilde{C}E}\right],
    \end{align}
    which then gives
    \begin{align}
        &\rho_{ABCA_{f}B_{f}C_{f}EXYZX_{f}Y_{f}Z_{f}\Lambda_{1}\Lambda_{2}}\nonumber\\
        &= \sum_{a_{f},b_{f},c_{f},x_{f},y_{f},z_{f}} q(x_{f},y_{f},z_{f})\sum_{a,b,c,x,y,z}\sum_{\lambda_{2}}p(\lambda_{2})O_{A}(a_{f}\vert a,x,x_{f},\lambda_{2})O_{B}(b_{f}\vert b,y,y_{f},\lambda_{2})\times\nonumber\\
        &\qquad \qquad O_{C}(c_{f}\vert c,z,z_{f},\lambda_{2})p_{i}(a,b,c\vert x,y,z)\sum_{\lambda_{1}}p(\lambda_{1})I_{A}(x\vert x_{f},\lambda_{1})I_{B}(y\vert y_{f},\lambda_{1})I_{B}(y\vert y_{f},\lambda_{1})\times\nonumber\\
        &\qquad \qquad I_{C}(z\vert z_{f},\lambda_{1}) [abcxyza_{f}b_{f}c_{f}x_{f}y_{f}z_{f}]_{ABCA_{f}B_{f}C_{f}XYZX_{f}Y_{f}Z_{f}}\otimes\rho_{E}^{abcxyz}\otimes[\lambda_{1}\lambda_{2}]_{\Lambda_{1}\Lambda_{2}}.
    \end{align}
    Then, following arguments similar to that given in Theorem~\ref{theorem:tri-mono-locr}, we obtain the desired inequality in~\eqref{eq:q-intr-nl-monotoone}.
    \end{proof}

    \section{Local Hidden-Variable Models}\label{subsec:local-cor}
    
    In this appendix, we show that tripartite intrinsic non-locality and quantum tripartite intrinsic non-locality vanish for tripartite correlations that admit a local hidden-variable model. A tripatite correlation admits a local hidden-variable model if it is of the following form \cite{brunner2014bell}: 
    \begin{equation}
        p(a,b,c\vert x,y,z)=\sum_{\Lambda}p_{\Lambda}(\lambda)p(a\vert x,\lambda)p(b\vert y,\lambda)p(c\vert z,\lambda),
    \end{equation}
    where $\lambda$ is a local hidden variable. If a distribution admits such a model, then the model can be reformulated so that all the factor distributions $p(a|x,\lambda)$, $p(b|y,\lambda)$, and $p(c|z,\lambda)$ are deterministic with probabilities equal to either zero or one. In this case, using the classical information $\lambda$ and the input settings of $x$, $y$, and $z$, an eavesdropper can deduce the outcomes $a$, $b$, and $c$ with certainty. Hence, tripartite intrinsic non-locality and quantum tripartite intrinsic non-locality should vanish for local tripartite correlations.
    
    \begin{theorem}[TINL \& QTINL for local correlations]\label{theorem:local-cor}
        Tripartite intrinsic non-locality and quantum tripartite intrinsic non-locality vanish for every distribution $p(a,b,c\vert x,y,z)$ having a local hidden-variable model, i.e., $N(A;B;C)_{p} =0$ and $N_{Q}(A;B;C)_{p} =0$.
    \end{theorem}
    \begin{proof}
    Consider the following no-signaling extension of $p(a,b,c\vert x,y,z)$:
        \begin{align}\label{ext:local-gen}
            \zeta_{ABCEXYZ} & = \sum_{a,b,c,x,y,z} q(x,y,z)p(a,b,c\vert x,y,z)[abcxyz]_{ABCXYZ}\otimes\rho_{E}^{abcxyz}.
        \end{align}
        A particular no-signaling extension of $p(a,b,c\vert x,y,z)$ is
        \begin{align}
            \rho_{ABCEXYZ} & = \sum_{a,b,c,x,y,z,\Lambda} p_{\Lambda}(\lambda)q(x,y,z)p(a\vert x,\lambda)p(b\vert y,\lambda)p(c\vert z,\lambda)[abcxyz]_{ABCXYZ}\otimes[\lambda]_{E}\nonumber\\
            & = \sum_{\Lambda} p_{\Lambda}(\lambda)\rho^\lambda_{ABCXYZ}\otimes[\lambda]_{E}
        \end{align}
        where 
        \begin{align}\label{ext:local-parti}
        \rho^\lambda_{ABCXYZ}= \sum_{a,b,c,x,y,z} q(x,y,z)p(a\vert x,\lambda)p(b\vert y,\lambda)p(c\vert z,\lambda)[abcxyz]_{ABCXYZ}.
        \end{align} 
        Then, it follows that
        \begin{align}
            \inf_{\text{ext. in }\eqref{ext:local-gen}} I(A;B;C \vert EXYZ)_{\zeta}\leq I(A;B;C \vert EXYZ)_{\rho}=\sum_{\Lambda} p_{\Lambda}(\lambda)I(A;B;C \vert XYZ)_{\rho_\lambda}
        \end{align}
        From inspection of \eqref{ext:local-parti}, we conclude that $I(A;B;C \vert XYZ)_{\rho_\lambda}=0$. Therefore, we obtain the first desired claim: $N(A;B;C)_{p} =0$. One can see that $N_{Q}(A;B;C)_{p} =0$ by considering the appropriate quantum extensions.
    \end{proof}
\end{document}